\documentclass[a4paper,12pts]{article}

\usepackage{arxiv}

\pdfoutput=1

\bibliographystyle{plainurl}

\usepackage{amssymb,amsmath}
\usepackage{xspace}
\usepackage[ruled,vlined]{algorithm2e}
\usepackage{dsfont}
 


 
 
\usepackage{tikzsymbols}

\newtheorem{theorem}{Theorem}
\newtheorem{lemma}[theorem]{Lemma}

\newtheorem{definition}[theorem]{Definition}

\newtheorem{fact}[theorem]{Fact}

\usepackage{dsfont}

\usepackage[pdfa,unicode]{hyperref}%
\usepackage{cleveref}
\usepackage{stmaryrd}

\newcommand{\quickset}[1]{\left\lbrace #1 \right\rbrace}
\newcommand{\Nat}{\mathbb{N}}
 
\newcommand{\ga}{\gamma_1}
\newcommand{\gb}{\gamma_2}
\newcommand{\E}{\mathcal{E}}
\newcommand{\R}{\mathcal{R}}
\newcommand{\segment}[2]{\llbracket #1, #2 \rrbracket}
\newcommand{\cX}{\mathcal{X}}
\newcommand{\motnouveau}[1]{\emph{#1}}

\newcounter{algorithm}
\newtheorem{algo}[algorithm]{Algorithm}{\bfseries}{\itshape}
\newenvironment{proof}
    {\noindent {\bf Proof:}}
    { 
 \hfill $\square$}

\begin{document}

\title{Self-Stabilization and Byzantine Tolerance for Maximal Independent Set}   

\author{Johanne Cohen\\
        LISN-CNRS, Universit\'e Paris-Saclay, France\\
        \texttt{johanne.cohen@lri.fr}\\
        \And
        Laurence Pilard\\
        PaRAD,  UVSQ, , Universit\'e Paris-Saclay, France\\
        valentin.dardilhac@ens-paris-saclay.fr\\
        \And
       François Pirot\\
        LISN-CNRS, Universit\'e Paris-Saclay, France\\
        \texttt{Francois.Pirot@lri.fr}
           \And
      Jonas Sénizergues\\
        LISN-CNRS, Universit\'e Paris-Saclay, France\\
        \texttt{Jonas.Senizergues@lri.fr}}
\date{}
\maketitle
 
 \section{Introduction}\label{sec:intro}

Maximal independent sets have received a lot of attention in different areas.  For instance,  in wireless networks, the maximum independent sets can be used as a black box to perform communication (to collect or to broadcast information) (see~\cite{liu2017cooperative,gao2017novel}  for example).  In self-stabilizing distributed algorithms, this problem is also a fundamental tool to transform an algorithm from one model to another~\cite{grati2007, turau2006randomized}. 

 An \motnouveau{independent set} $I$ in a graph is a set of vertices such that no two of them form an edge in the graph. It is called \motnouveau{maximal} when it is maximal inclusion-wise (in which case it is also a minimal dominating set).

The maximal independent set (MIS) problem has been extensively studied in parallel and distributed settings, following the seminal works of~\cite{Alon1986,linial1987, luby1986}.  Their idea is based on the fact that a node  joins the ``MIS under construction'' $S$ according to the neighbors: node $v$ joins the set $S$ if it has no neighbor in $S$, and it leaves the set $S$ if at least one of its neighbors is in $S$.  Most algorithms in the literature, including ours, are based on this approach.

The MIS problem has been extensively studied in the {\sc Local} model, \cite{ghaffari2016improved,rozhovn2020polylogarithmic,censor2020derandomizing} for instance  (a synchronous, message-passing model of distributed computing in which messages can be arbitrarily large) and in the {\sc Congest} model~\cite{peleg2000distributed} (synchronous model where messages are $O(\log n)$ bits long). In the {\sc Local} model, Barenboim \emph{et al.}~\cite{barenboim2018locally} focus on identified system and gave a self-stabilizing algorithm producing a MIS within $O(\Delta + \log^{*} n)$ rounds.
Balliu \emph{et al}~\cite{balliu2019lower} prove that the previous algorithm~\cite{barenboim2018locally} is optimal for a wide range of parameters in the {\sc Local} model. In the {\sc Congest} model, Ghaffari \emph{et al.}~\cite{ghaffari2021improved} prove that there exists a randomized distributed algorithm that computes a maximal independent set in $O(\log \Delta \cdot \log \log n + \log^{6} \log n)$ rounds with high probability.

Self-stabilizing algorithms for maximal independent set have been designed in various models (anonymous network~\cite{shukla1995observations,turau2006randomized,turau2019making}  or not~\cite{goddard2003self,ikeda2002space}).
Up to our knowledge, Shukla \emph{et al.}~\cite{shukla1995observations} present the first algorithm designed for finding a MIS in a graph using self-stabilization paradigm for anonymous networks.  Some other self-stabilizing works deal with this problem assuming identifiers: with a synchronous daemon~\cite{goddard2003self} or distributed one~\cite{ikeda2002space}. These two works require $O(n^2)$ moves to converge. Turau~\cite{turau2007linear} improves these results to $O(n)$ moves under the distributed daemon. Recently, some works improved the results 
in the synchronous model. For non-anonymous networks,  Hedetniemi \cite{hedetniemi2021self} designed  a self-stabilization algorithm for solving the problem related to dominating sets in graphs in particular for a maximal independent set which stabilizes in $O(n)$ synchronous rounds. Moreover,  for  anonymous networks, Turau~\cite{turau2019making} 
 design some Randomized self-stabilizing algorithms for maximal independent set  w.h.p. in $O(\log  n)$ rounds.
 See the survey~\cite{guellati2010survey} for more details on MIS self-stabilizing algorithms. 
 
Some variant of the maximal independent set problem have been investigated, as for example the 1-maximal independent set problem \cite{tanaka2021self,SHI200477} or Maximal Distance-$k$ Independent Set \cite{benreguia2021selfstabilizing,johnen:hal-03138979}.   Tanaka 
\emph{et al}~\cite{tanaka2021self}  designed a silent self-stabilizing 1-MIS algorithm under the weakly-fair distributed daemon for any identified  network in  $O(nD)$ rounds (where $D$ is a diameter of the graph).

In this paper, we focus on the construction of a MIS handling both transient and Byzantine faults. 
On one side, transient faults can appear in the whole system, possibly impacting all nodes. However, these faults are not permanent, thus they stop at some point of the execution. 
Self-stabilization~\cite{dijkstra74} is the classical paradigm to handle transient faults. Starting from any arbitrary configuration, a self-stabilizing algorithm eventually resumes a correct behavior without any external intervention. On the other side, (permanent) Byzantine faults~\cite{lamport82} are located on some faulty nodes and so the faults only occur from them. However, these faults can be permanent, \emph{i.e.}, they could never stop during the whole execution.

In a distributed system, multiple processes can be active at the same time, meaning they are in a state where they could make a computation. 
The definition of self-stabilizing algorithm is centered around the notion of \motnouveau{daemon}, which captures the ways the choice of which process to schedule for the next time step by the execution environment can be made.
Two major types of daemon are the \motnouveau{sequential} and the \motnouveau{distributed} ones. A \motnouveau{sequential daemon} only allows one process to be scheduled for a given time step, while a \motnouveau{distributed daemon} allows the execution of multiple processes at the same time. 
Daemons can also be \motnouveau{fair} when they have to eventually schedule every process that is always activable, or \motnouveau{adversarial} when they are not fair.
As being distributed instead of sequential (or adversarial instead of fair) only allows for more possibilities of execution, it is harder to make an algorithm with the assumption of a distributed (resp. adversarial) daemon than with the assumption of a sequential (resp. fair) daemon.

We introduce the possibility that some nodes, that we will call \motnouveau{Byzantine} nodes, are not following the rules of the algorithm, and may change the values of their attributes at any time step. Here, if we do not work under the assumption of a fair daemon, one can easily see that we cannot guarantee the convergence of any algorithm as the daemon could choose to always activate alone the same Byzantine node, again and again. 

Under the assumption of a fair daemon, there is a natural way to express complexity, not in the number of moves performed by the processes, but in the number of \motnouveau{rounds}, where a round captures the idea that every process that wanted to be activated at the beginning of the round has either been activated or changed its mind. We give a self-stabilizing randomized algorithm that, in an 
\motnouveau{anonymous} network (which means that processes do not have unique identifiers to identify themselves) 
with Byzantine nodes under the assumption of a distributed fair daemon, finds a maximal independent set of a superset of the nodes at distance $3$ or more from Byzantine nodes. We show that  the algorithm  stabilizes in $O(\Delta n)$ rounds w.h.p., where $n$ is the size and $\Delta$ is the diameter of the underlying graph. 

In this paper, we first present the model (Section~\ref{sec:model}).

Then, we give a self-stabilizing randomized algorithm with Byzantine nodes under the fair daemon  (Section~\ref{sec:Byzantine}), which converges in $O(\Delta n)$ rounds.

Then, in the last part, we give a self-stabilizing randomized algorithm that finds a maximal independent set in an anonymous network, under the assumption of a distributed adversarial daemon. We show that our algorithm converge  in $O(n^2)$ moves with high probability.

\section{Model}\label{sec:model}
A system consists of a set of processes where two adjacent processes can communicate with each other. The communication relation is represented by a graph $G = (V,E)$ where $V$ is the set of the processes (we will call \motnouveau{node} any element of $V$ from now on) and $E$ represents the neighbourhood relation between them, \emph{i.e.}, $uv \in E$ when $u$ and $v$ are adjacent nodes.
By convention we write $|V|=n$ and $|E|=m$. If $u$ is a node, $N(u)=\quickset{v \in V | uv \in E}$ denotes the open neighbourhood, and $N[u]=N(u) \cup \quickset{u}$ denotes the closed neighbourhood. We note $deg(u) = |N(u)|$ and $\Delta = \max \quickset{deg(u)| u\in V}$.

We assume the system to be \motnouveau{anonymous} meaning that a node has no identifier.
We use the \motnouveau{state model}, which means that each node has a set of \motnouveau{local variables} which make up the \motnouveau{local state} of the node. 
A node can read its local variables and all the local variables of its neighbours, but can only rewrite its own local variables. A \motnouveau{configuration} is the value of the local states of all nodes in the system.
When $u$ is a node and $x$ a local variable, the \motnouveau{$x$-value} of $u$ in configuration $\gamma$ is the value $x_u^{\gamma}$.

An algorithm is a set of \motnouveau{rules}, where each rule is of the form $\langle guard \rangle \rightarrow \langle command \rangle$ and is parametrized by the node where it would be applied.
The \motnouveau{guard} is a predicate over the variables of the said node and its neighbours. The \motnouveau{command} is a sequence of actions that may change the values of the node's variables (but not those of its neighbours).
A rule is \motnouveau{enabled} on a node $u$ in a configuration $\gamma$ if the guard of the rule holds on $u$ in $\gamma$. A node is \motnouveau{activable} on a configuration $\gamma$ if at least one rule is enabled on $u$.  
We call \motnouveau{move} any ordered pair $(u,r)$ where $u$ is a node and $r$ is a rule.
A move is said \motnouveau{possible} in a given configuration $\gamma$ if $r$ is enabled on $u$ in $\gamma$.

The \motnouveau{activation} of a rule on a node may only change the value of variables of that specific node, but multiple moves may be performed at the same time, as long as they act on different nodes. To capture this, we say that a set of moves $t$ is \motnouveau{valid} in a configuration $\gamma$ when it is non-empty, contains only possible moves of $\gamma$, and does not contain two moves concerning the same node.
Then, a \motnouveau{transition} is a triplet $(\gamma,t,\gamma')$ such that:
(i) $t$ is a valid set of moves of $\gamma$ and (ii) $\gamma'$ is a possible configuration after every node $u$ appearing in $t$ performed simultaneously the code of the associated rule, beginning in configuration $\gamma$.
We will write such a triplet as $\gamma \xrightarrow{t} \gamma'$. We will also write $\gamma \rightarrow \gamma'$ when there exists a transition from $\gamma$ to $\gamma'$.  $V(t)$ denotes the set of nodes that appear as first member of a couple in $t$. 

We say that a rule $r$ is \motnouveau{executed} on a node $u$ in a transition $\gamma \xrightarrow{t} \gamma'$ (or equivalently that the move $(u,r)$ is \motnouveau{executed} in $\gamma \xrightarrow{t} \gamma'$) when the node $u$ has performed the rule $r$ in this transition, that is when $(u,r) \in t$. In this case, we say that $u$ has been \motnouveau{activated} in that transition. 
Then, an \motnouveau{execution} is an alternate sequence of configurations and move sets $\gamma_0,t_1,\gamma_1 \dotsm t_i,\gamma_i, \dotsm $ where (i) the sequence either is infinite or finishes by a configuration and (ii) for all $i \in \Nat$ such that it is defined, $(\gamma_i,t_{i+1},\gamma_{i+1})$ is a transition.
We will write such an execution as $\gamma_0\xrightarrow{t_1}\gamma_1 \dotsm \xrightarrow{t_i}\gamma_i \dotsm $
When the execution is finite, the last element of the sequence is the \motnouveau{last configuration} of the execution.
An execution is \motnouveau{maximal} if it is infinite, or it is finite and no node is activable in the last configuration. It is called \motnouveau{partial} otherwise.
We say that a configuration $\gamma'$ is \motnouveau{reachable} from a configuration $\gamma$ if there exists an execution starting in configuration $\gamma$ that leads to configuration $\gamma'$. We say that a configuration is \emph{stable} if no node is activable in that configuration.

The \motnouveau{daemon} is the adversary that chooses, from a given configuration, which nodes to activate in the next transition. Two types are used: the \motnouveau{adversarial distributed daemon} 
 that allows all possible executions and  the \motnouveau{fair distributed daemon} 
that only allows executions where nodes cannot be continuously activable without being eventually activated.


Given a specification and $\mathcal{L}$ the associated set of \motnouveau{legitimate configuration}, \emph{i.e.}, the set of the configurations that verify the specification, a probabilistic algorithm is \motnouveau{self-stabilizing} when these properties are true: (\motnouveau{correctness}) every configuration of an execution starting by a configuration of $\mathcal{L}$ is in $\mathcal{L}$ and (\motnouveau{convergence}) from any configuration, whatever the strategy of the daemon, the resulting execution eventually reaches a configuration in $\mathcal{L}$ with probability~$1$.

The time complexity of an algorithm that assumes the fair distributed daemon is given as a number of \motnouveau{rounds}. The concept of round was introduced by Dolev \emph{et al.}~\cite{doismo97}, and reworded by Cournier \emph{et al.}~\cite{codevi2006} to take into account  activable nodes. We quote the two following definitions from  Cournier \emph{et al.}~\cite{codevi2006}: ``

\begin{definition}
We consider that a node $u$ executes a \motnouveau{disabling action} in the transition $\ga \to \gb$ if $u$ (i) is activable in $\ga$, (ii) does not execute any rule in $\ga \to \gb$ and (iii) is not activable in $\gb$.
\end{definition}

The disabling action represents the situation where at least one neighbour of $u$ changes its local state in $\ga \to \gb$, and this change effectively made the guard of all rules of $u$ false in $\gb$. The time complexity is then computed capturing the speed of the slowest node in any execution through the round definition~\cite{doismo97}.

Given an  execution $\E$, the first \emph{round} of $\E$ (let us call it $\R_1$) is the minimal prefix of $\E$ containing the execution of one action (the execution of a rule or a \emph{disabling action}) of every activable nodes from the initial configuration.
Let $\E'$ be the suffix of $\E$ such that $\E = \R_1 \E'$. The second round of $\E$ is the first round of $\E'$, and so on.''
 
Observe that Definition~\ref{pif_round_df} is equivalent to Definition~\ref{my_round_df}, which is simpler in the sense that it does not refer back to the set of activable nodes from the initial configuration of the round.

Let $\E$ be an execution. A \emph{round} is a sequence of consecutive transitions in $\E$.
The first round begins at the beginning of $\E$; successive rounds begin immediately after the previous round has ended.
The current round ends once every node $u\in V$ satisfies at least one of the following two properties:
(i) $u$ has been activated in at least one transition during the current round or (ii) $u$ has been non-activable in at least one configuration during the current round.

Our first algorithm is to be executed in the presence of \motnouveau{Byzantine} nodes; that is, there is a subset $B \subseteq V$ of adversarial nodes that are not bound by the algorithm. Byzantine nodes are always activable.  An activated Byzantine node is free to update or not its local variables. Finally, observe that in the presence of Byzantine nodes all maximal executions are infinite.
We denote by $d(u,B)$  the minimal (graph) distance between node $u$ and a Byzantine node, and we define for $i\in \mathbb{N}$:
$V_i = \quickset{u \in V | d(u,B)>i }$. 
Note that $V_0$ is exactly the set of non-Byzantines nodes, and that $V_{i+1}$ is exactly the set of nodes of $V_i$ whose neighbours are all in $V_i$.

When $i,j$ are integers, we use the 
standard mathematical notation for the integer segments:
 $\segment{i}{j}$ is the set of integers that are greater than or equal to $i$ and smaller than or equal to $j$ (\emph{i.e.} $\segment{i}{j}=[i,j]\cap \mathbb{Z} = \lbrace i, \dotsm, j \rbrace$).
$Rand(x)$ with $x \in [0,1]$ represents the random function that outputs $1$ with probability $x$, and $0$ otherwise.

\section{With Byzantines Nodes under the Fair Daemon}
\label{sec:Byzantine}

\subsection{The algorithm}

The algorithm builds a maximal independent set  represented by  a local variable~$s$. The  approach of the state of the art is the following: when two nodes are candidates to be in the independent set, then a local election  decides who will remain in the independent set. To perform a  local election,  the standard technique is to compare the identifiers of nodes. Unfortunately,  this mechanism is not robust to the presence of Byzantine nodes.

Keeping with the approach outlined above, when a node $u$ observes  that its neighbours are not in (or trying to be in) the independent set 
, the non-Byzantine node decides to join it with a certain probability.  The randomization helps to reduce the impact of Byzantine nodes. The choice of probability should reduce the impact of Byzantine nodes while maintaining the efficiency of the algorithm.

\begin{algo}
Any node $u$ has two local variables $s_u \in \quickset{\bot,\top}$ and $x_u \in \mathbb{N}$ and may make a move according to one of the following rules: \\
\textbf{(Refresh)} $x_u \not =  |N(u)| \rightarrow x_u := |N(u)| \quad (=deg(u))$ \\
\textbf{(Candidacy?)} $(x_u =  |N(u)|) \wedge (s_u=\bot)\wedge(\forall v \in N(u), s_v=\bot) \rightarrow$ \\ if $Rand(\frac{1}{1+\max(\quickset{x_v | v \in N[u]})})=1$ then $s_u := \top$\\
\textbf{(Withdrawal)} $(x_u = |N(u)|) \wedge (s_u=\top)\wedge(\exists v \in N(u), s_v =\top) \rightarrow s_u := \bot$
\end{algo}

Observe since we assume an anonymous setting, the only way to break symmetry is randomisation.  
The value of the probabilities for changing local variable must  carefully be chosen in order to reduce the impact of the  Byzantine node. 

A node joins the MIS with a probability $\frac{1}{1+\max(\quickset{x_v | v \in N[u]})}$. 
The idea to ask the neighbours about their own number of neighbours (through the use of the $x$ variable) to choose the probability of a candidacy comes from the mathematical property $\forall k \in \mathbb{N}, (1-\frac{1}{k+1})^{k} > e^{-1}$, which will allow to have a good lower bound for the probability of the event ``some node made a successful candidacy, but none of its neighbours did''.

\subsection{Specification}


Since Byzantine nodes are not bound to follow the rules, we cannot hope for a correct solution in the entire graph. What we wish to do is to find a solution that works when we are far enough from the Byzantine nodes. 
One could think about a fixed containment radius around Byzantine nodes, but as we can see later this is not as simple, and it does not work with our approach. 

Let us define on any configuration $\gamma$ the following set of nodes, that represents the already built independent set: 
$$I_{\gamma} = \quickset{u \in V_1 | (s_{u}^{\gamma}=\top) \wedge \forall v \in N(u), s_{v}^{\gamma}=\bot}$$
~\\[-1,4em]
We say that a node is \emph{locally alone} if it is candidate to be in the independent set (\emph{i.e.} its $s$-value is $\top$) while none of its neighbours are.
In configuration $\gamma$, $I_{\gamma}$ is the set of all locally alone nodes of $V_1$.   

\begin{definition}
A configuration is said \emph{legitimate} when $I_\gamma$ is a maximal independent set of $V_2 \cup I_\gamma$.
\end{definition}

\subsubsection{An example:}


\newcommand{\topo}[4]{
 \begin{tikzpicture}[scale=0.55]

\tikzstyle{vertex}=[fill=white, draw=black, shape=circle]
\tikzstyle{edge}=[-]

                 \node[style=vertex,shape =rectangle]   (b) at (-12, 0) {$b$};       
                 \node [style=vertex] (c) at (-10, 0) {$v_{1}$}; 
 	         \node [style=vertex] (d) at (-8, 0) {$v_{2}$};   		
	         \node [style=vertex] (e) at (-6, 0) {$v_{3}$};  
 
   		\draw [style=edge] (b) to (c);
 		\draw [style=edge] (c) to (d);
		 \draw [style=edge] (e) to (d);

		\draw (b.north) node[above]{#1} ;
 	         \draw (c.north) node[above]{#2} ;
 	         \draw (d.north) node[above]{#3} ;
 		\draw (e.north) node[above]{#4} ;
\end{tikzpicture}

}

Figure \ref{fig:byzantin} gives an example of an execution of the algorithm. 
Figure \ref{fig:byzantin}a depicts a network in a given configuration.  The symbol drawn above the node represents the local variable $s$. Each local variable $x$ contains the degree of its associated node. 
Byzantine node is shown with a square.  

In the initial configuration, nodes  $v_{1}$ and $v_{2}$ are  in the independent set, and then are activable for \textbf{Withdrawal}.  In the first step, the daemon activates $v_{1}$ (\textbf{Withdrawal}) and  $v_{2}$  (\textbf{Withdrawal}) leading to configuration $\gamma_1$ (Fig. \ref{fig:byzantin}b).
 In the second step, the daemon activates $v_{1}$ (\textbf{Candidacy?}). Node $v_{1}$ randomly decides whether to set $s_{v_{1}}:=\top $  leading to configuration $\gamma_2$ (Fig. \ref{fig:byzantin}c), or $s_{v_{1}}:=\bot $ leading to configuration $\gamma_1$ (Fig. \ref{fig:byzantin}b).
Assume that $v_{1}$  chooses $s_{v_{1}}:=\top $. At this moment, node $v_{1}$ is ``locally alone'' in the independent set.  In the third step, the daemon activates $b$ and  $b$ makes a Byzantine move setting $s_{b}:=\top $, leading to configuration $\gamma_3$ (Fig. \ref{fig:byzantin}d).

In the fourth step, the daemon activates $v_{1}$   (\textbf{Withdrawal}) and $b$ that sets $s_{b}:=\bot $.
  The configuration is now the same as the first configuration (Fig. \ref{fig:byzantin}b). The daemon is assumed to be fair, 
 so nodes $v_{2}$ and $v_{3}$ need to be activated before the execution can be called an infinite loop. These activations will prevent the node $v_1$ to alternate forever between in and out of the independent set, while the rest of the system remains out of it. 
  In the fifth step, the daemon activates $v_{1}$ (\textbf{Candidacy?}), $v_{2}$ (\textbf{Candidacy?}) and  $v_{3}$ (\textbf{Candidacy?}). They randomly decide to change their local variable $s$.
  Assume that $v_{1}$, $v_{2}$ and $v_{3}$ choose $s_{v_{1}}:=\top $, $s_{v_{2}}:=\bot $, and $s_{v_{3}}:=\top $, leading to configuration $\gamma_5$ (Fig. \ref{fig:byzantin}e). At this moment, node $v_{3}$ is ``locally alone'' in the independent set.  $v_3$ is far enough from the Byzantine node then it will remain in the independent set whatever $b$ does. 

\begin{figure}[h]
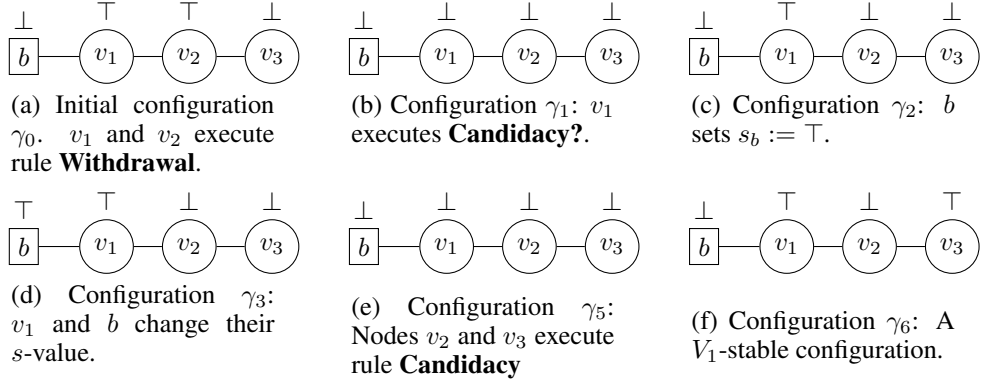

\centering
\begin{tabular}{p{4.1cm}p{4.1cm}p{4.1cm}}
 \topo{$\bot$}{$\top$}{$\top$}{$\bot$}  &  \topo{$\bot$}{$\bot$}{$\bot$}{$\bot$}  &  \topo{$\bot$}{$\top$}{$\bot$}{$\bot$}  \\[-1em]
~\begin{minipage}[h]{3.5cm}
(a) Initial configuration $\gamma_0$. $v_{1}$ and  $v_{2}$  execute rule \textbf{Withdrawal}. 
\end{minipage}
  & 
~\begin{minipage}[h]{3.5cm}
(b) Configuration $\gamma_1$: $v_{1}$  executes \textbf{Candidacy?}.  \\
\end{minipage}
  &
~\begin{minipage}[h]{3.5cm}
   (c) Configuration  $\gamma_2$: $b$ sets  $s_{b}:=\top $.\\
\end{minipage}
   \\[1,3em]
 \topo{$\top$}{$\top$}{$\bot$}{$\bot$}   &
\topo{$\bot$}{$\bot$}{$\bot$}{$\bot$}    &   \topo{$\bot$}{$\top$}{$\bot$}{$\top$}  
   \\[-1em]
~\begin{minipage}[h]{3.5cm}
(d) Configuration  $\gamma_3$: $v_{1}$ and $b$ change their $s$-value.\\
\end{minipage}
  & 
~\begin{minipage}[h]{3.5cm}
(e) Configuration  $\gamma_5$: Nodes $v_2$ and $v_3$ execute rule \textbf{Candidacy}
\end{minipage}
 &
 ~\begin{minipage}[h]{3.5cm}
(f) Configuration  $\gamma_6$: A $V_1$-stable configuration.
\end{minipage}
\end{tabular}
\caption{Execution}
\label{fig:byzantin}

\end{figure}

%
%
%
%



\subsubsection{About the specification:}

Our goal is to design an algorithm that builds a maximal independent set of the subgraph induced by a set of nodes where node ``too close'' to Byzantine nodes have been removed. 
The question now is to define what does ``too close'' mean. One could think about a fixed containment radius only excluding nodes at distance at most 1 from Byzantine nodes. This set of nodes has been previously defined as  $V_1$. Indeed, in Figure \ref{ex:spec}.(a),  $v_1$ and $v_2$ belongs to $V_1$ and their local view of the system is correct, then they have no reason to change their states. Moreover, Byzantine nodes are too far away to change that: whatever the value of the state of $v_0$, the view of $v_1$ remains correct. 
Thus a containment radius of 1 could seem correct. However, in Figure \ref{ex:spec}.(b), if the Byzantine node does not make any move, then $v_0$ remains in the MIS while $v_1$ remains out of it. Thus, in this example, if we only consider nodes in $V_1$, the $\top$-valued nodes of $V_1$ are not a MIS of $V_1$. If $V_1$ is not always a good choice, neither is $V_2$. See Figure \ref{ex:spec}.(c), as one can see that all nodes in $V_1$ will never change their local state. The same can be said for $V_k$ for any $k$, see example Figure \ref{ex:spec}.(d) for $V_3$. The solution is then to consider a set of nodes defined from a fixed containment radius to which we add locally alone neighboring nodes. The smallest containment radius that works with this approach is 3 (which corresponds to set $V_2$).  Note that it depends on the current configuration and not only on the underlying graph.

\subsubsection{About the choice of probability to join the MIS:} 

We could have gone with the same probability for every node, but that comes with the cost of making the algorithm very sensitive to the connectivity of the underlying graph. As we rely for convergence on the event where a node is candidate alone (\emph{i.e.} switch to $\top$ without other node doing the same in its neighborhood), the probability of progress in a given number of rounds would then be exponentially decreasing with the degrees of the graph.

We could have gone with something depending only on the degree of the node where the rule is applied. While it could have been an overall improvement over the uniform version above, the minoration of the probability of progress that can be made with a local scope is no better. We cannot exclude that a finer analysis would lead to a better overall improvement, but it would require to deal with far more complex math. On a smaller scale, we can also note that this choice would introduce a bias toward small degree nodes, while we might not want that (depending on the application).

Then, we have chosen the version where nodes takes into account their degree and what they know of the degree of their neighbors. On one hand, the first concern you could rise here would be the potential sabotage by Byzantine nodes. Here is the intuition of why this cannot be a problem here. If a node $u$ is at distance at least 2 from any Byzantine node and if $u$ is ``locally alone'' in the independent set, then whatever the Byzantine nodes do, $u$ will forever remain in the independent set. To maximize the harm done, the Byzantine nodes have to prevent indirectly such a node to join the independent set. To do so it has to maximize the probability of its neighbours to be candidate to the independent set. But Byzantine nodes cannot lie efficiently in that direction, as the probability is upper-bounded by the degree values of both the node and its non-Byzantine neighbours.
On the other hand, this choice allows us to adress the problem that we had with the previous solution. Here, we can indeed frame the probability to be candidate alone between two constant bounds with a simple local analysis. Thus, we can ensure that the convergence speed does not depend on the connectivity of the underlying graph. Again, on a smaller scale, we also greatly reduce the bias toward small degree nodes compared to the previous option.

%
%
%



\newcommand{\topoDeux}[3]{
 \begin{tikzpicture}[scale=0.55]

\tikzstyle{vertex}=[fill=white, draw=black, shape=circle]
\tikzstyle{edge}=[-]

                 \node[style=vertex,shape =rectangle]   (b) at (-12, 0) {$b$};       
                 \node [style=vertex] (c) at (-10, 0) {$v_{0}$}; 
 	         \node [style=vertex] (d) at (-8, 0) {$v_{1}$};   		
 
   		\draw [style=edge] (b) to (c);
 		\draw [style=edge] (c) to (d);

		\draw (b.north) node[above]{#1} ;
 	         \draw (c.north) node[above]{#2} ;
 	         \draw (d.north) node[above]{#3} ;
 \end{tikzpicture}
}

\newcommand{\topoTrois}[4]{
 \begin{tikzpicture}[scale=0.55]

\tikzstyle{vertex}=[fill=white, draw=black, shape=circle]
\tikzstyle{edge}=[-]

                 \node[style=vertex,shape =rectangle]   (b) at (-12, 0) {$b$};       
                 \node [style=vertex] (c) at (-10, 0) {$v_{0}$}; 
 	         \node [style=vertex] (d) at (-8, 0) {$v_{1}$};   		
	         \node [style=vertex] (e) at (-6, 0) {$v_{2}$};  

   		\draw [style=edge] (b) to (c);
 		\draw [style=edge] (c) to (d);
		 \draw [style=edge] (e) to (d);

		\draw (b.north) node[above]{#1} ;
 	         \draw (c.north) node[above]{#2} ;
 	         \draw (d.north) node[above]{#3} ;
	         \draw (e.north) node[above]{#4} ;

 \end{tikzpicture}
}

\newcommand{\topoCinq}[6]{
 \begin{tikzpicture}[scale=0.55]

\tikzstyle{vertex}=[fill=white, draw=black, shape=circle]
\tikzstyle{edge}=[-]

                 \node[style=vertex,shape =rectangle]   (b) at (-12, 0) {$b$};       
                 \node [style=vertex] (c) at (-10, 0) {$v_{0}$}; 
 	         \node [style=vertex] (d) at (-8, 0) {$v_{1}$};   		
	         \node [style=vertex] (e) at (-6, 0) {$v_{2}$};  
 	         \node [style=vertex] (f) at (-4, 0) {$v_{3}$};  
 	         \node [style=vertex] (g) at (-2, 0) {$v_{4}$};  

   		\draw [style=edge] (b) to (c);
 		\draw [style=edge] (c) to (d);
		 \draw [style=edge] (e) to (d);
		 \draw [style=edge] (e) to (f);
		 \draw [style=edge] (f) to (g);

		\draw (b.north) node[above]{#1} ;
 	         \draw (c.north) node[above]{#2} ;
 	         \draw (d.north) node[above]{#3} ;
	         \draw (e.north) node[above]{#4} ;
 		 \draw (f.north) node[above]{#5} ;
 		 \draw (g.north) node[above]{#6} ;

 \end{tikzpicture}
}
\newcommand{\topoSix}[7]{
 \begin{tikzpicture}[scale=0.55]

\tikzstyle{vertex}=[fill=white, draw=black, shape=circle]
\tikzstyle{edge}=[-]

                 \node[style=vertex,shape =rectangle]   (b) at (-12, 0) {$b$};       
                 \node [style=vertex] (c) at (-10, 0) {$v_{0}$}; 
 	         \node [style=vertex] (d) at (-8, 0) {$v_{1}$};   		
	         \node [style=vertex] (e) at (-6, 0) {$v_{2}$};  
 	         \node [style=vertex] (f) at (-4, 0) {$v_{3}$};  
 	         \node [style=vertex] (g) at (-2, 0) {$v_{4}$};  
  	         \node [style=vertex] (h) at (0, 0) {$v_{5}$};  

   		\draw [style=edge] (b) to (c);
 		\draw [style=edge] (c) to (d);
		 \draw [style=edge] (e) to (d);
		 \draw [style=edge] (e) to (f);
		 \draw [style=edge] (f) to (g);
 		 \draw [style=edge] (g) to (h);

		\draw (b.north) node[above]{#1} ;
 	         \draw (c.north) node[above]{#2} ;
 	         \draw (d.north) node[above]{#3} ;
	         \draw (e.north) node[above]{#4} ;
 		 \draw (f.north) node[above]{#5} ;
 		 \draw (g.north) node[above]{#6} ;
 		 \draw (h.north) node[above]{#7} ;

 \end{tikzpicture}
}

\begin{figure}[h]
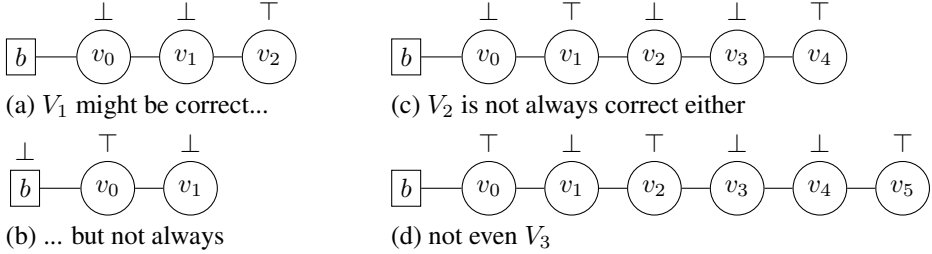

\label{ex:spec}
\begin{tabular}{p{4.7cm}p{6.1cm}}
\topoTrois{}{$\bot$}{$\bot$}{$\top$} & 
\topoCinq{}{$\bot$}{$\top$}{$\bot$}{$\bot$}{$\top$}\\
(a) $V_1$ might be correct...  & (c) $V_2$ is not always correct either
\end{tabular}
~\\
\begin{tabular}{p{4.7cm}p{6.1cm}}

\topoDeux{$\bot$}{$\top$}{$\bot$} & 
\topoSix{}{$\top$}{$\bot$}{$\top$}{$\bot$}{$\bot$}{$\top$}\\
(b) ... but not always & (d) not even $V_3$
\end{tabular}
\caption{What is the good containment radius?}

\end{figure}

\subsection{The proof}

%
%
%
%

Every omitted proof can be seen in the appendix.

We say that in a configuration $\gamma$, a node $u$ is \motnouveau{degree-stabilized} if rule \textbf{Refresh} is not enabled on it. The configuration  $\gamma$ is then said to be \motnouveau{degree-stabilized} if every non-Byzantine node is degree-stabilized.
Observe the two following facts about degree-stabilization (proofs in appendix):
\begin{lemma}
Any reachable configuration from a degree-stabilized configuration is degree-stabilized.
\end{lemma}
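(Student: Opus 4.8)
I want to show that degree-stabilization is preserved under transitions, i.e. if $\gamma$ is degree-stabilized and $\gamma \to \gamma'$, then $\gamma'$ is degree-stabilized; the general statement about reachable configurations then follows by a trivial induction on the length of the witnessing execution. So fix a transition $\gamma \xrightarrow{t} \gamma'$ with $\gamma$ degree-stabilized, and fix an arbitrary non-Byzantine node $u$; I must show that \textbf{Refresh} is not enabled on $u$ in $\gamma'$, that is, $x_u^{\gamma'} = |N(u)|$.

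\textbf{Key steps.} First, note that the quantity $|N(u)| = deg(u)$ is determined by the communication graph $G$ alone, which is fixed throughout any execution; it does not depend on the configuration. So it suffices to track the variable $x_u$. Second, since $\gamma$ is degree-stabilized and $u$ is non-Byzantine, we have $x_u^{\gamma} = |N(u)|$. Now split into two cases according to whether $u \in V(t)$. If $u \notin V(t)$, then $u$ performs no rule in the transition, so none of its local variables change, hence $x_u^{\gamma'} = x_u^{\gamma} = |N(u)|$, and \textbf{Refresh} is disabled on $u$ in $\gamma'$. If $u \in V(t)$, then $u$ executes some rule $r$ in $\gamma \xrightarrow{t} \gamma'$. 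Inspecting the three rules: \textbf{Refresh} is not enabled on $u$ in $\gamma$ (since $\gamma$ is degree-stabilized), so $r$ is either \textbf{Candidacy?} or \textbf{Withdrawal}; in both cases the command only modifies $s_u$ and leaves $x_u$ untouched, so again $x_u^{\gamma'} = x_u^{\gamma} = |N(u)|$. In either case $x_u^{\gamma'} = |N(u)|$, so $u$ is degree-stabilized in $\gamma'$. Since $u$ was an arbitrary non-Byzantine node, $\gamma'$ is degree-stabilized.

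\textbf{Main obstacle.} There is essentially no deep obstacle here; the argument is a direct case analysis on the rules, relying on the observations that (i) $deg(u)$ is configuration-independent and (ii) only \textbf{Refresh} writes to $x_u$, while \textbf{Refresh} sets $x_u$ to exactly $|N(u)|$. The only point that requires a moment of care is the treatment of Byzantine nodes: a Byzantine neighbour of $u$ may change its own state arbitrarily, but this cannot affect $x_u$ or $|N(u)|$, so it is irrelevant to $u$'s degree-stabilization — and the claim only asserts degree-stabilization of non-Byzantine nodes anyway. One should also make explicit that the general "reachable configuration" statement follows because every prefix of the witnessing execution stays degree-stabilized, by induction on its length using the one-step statement above.
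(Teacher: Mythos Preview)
Your proposal is correct and follows essentially the same approach as the paper's proof, which is the single line ``No rule can change the $x$ value of a degree-stabilized non-Byzantine node.'' You have simply unpacked this observation into an explicit case analysis (activated vs.\ not activated; which rule is applied) and made the induction over the length of the execution explicit, whereas the paper leaves both implicit.
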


\begin{proof}
No rule can change the $x$ value of a degree-stabilized non-Byzantine node.
\end{proof}

\begin{lemma} \label{bz-lem-stab-1round}
From any configuration $\gamma$, the configuration $\gamma'$ after one round is degree-stabilized.
\end{lemma}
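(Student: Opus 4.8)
The plan is to prove that in $\gamma'$ every non-Byzantine node $u$ satisfies $x_u = |N(u)|$, which is exactly the definition of $\gamma'$ being degree-stabilized. Two preliminary observations drive everything. First, for a non-Byzantine node $u$ the quantity $|N(u)|$ is a constant of the (fixed) communication graph, the only rule that ever writes to $x_u$ is \textbf{Refresh}, and \textbf{Refresh} always performs the assignment $x_u := |N(u)|$; moreover a neighbour of $u$ can never modify $x_u$. Hence, as soon as $x_u = |N(u)|$ holds in some configuration, it holds in every subsequent configuration — degree-stabilization of a node is permanent (this is essentially the content of the previous lemma). Second, whenever $x_u \neq |N(u)|$, the only rule enabled on $u$ is \textbf{Refresh}, since the guards of both \textbf{Candidacy?} and \textbf{Withdrawal} contain the conjunct $x_u = |N(u)|$.

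Next I would fix an arbitrary non-Byzantine node $u$ and look at the round that starts at $\gamma$ and ends at $\gamma'$. If $x_u^\gamma = |N(u)|$, then by the first observation $x_u^{\gamma'} = |N(u)|$ and we are done. Otherwise $x_u^\gamma \neq |N(u)|$, so \textbf{Refresh} is enabled on $u$ in $\gamma$ and $u$ is activable in $\gamma$; by the definition of a round, during this round $u$ must satisfy (i) it is activated in at least one transition, or (ii) it is non-activable in at least one configuration.

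In case (ii), pick a configuration $\delta$ of the round in which $u$ is not activable; in particular \textbf{Refresh} is disabled on $u$ in $\delta$, so $x_u^\delta = |N(u)|$, and the first observation yields $x_u^{\gamma'} = |N(u)|$. In case (i), consider the first transition $\delta \to \delta'$ of the round in which $u$ is activated. Since $x_u$ can be changed only by $u$ itself and only through \textbf{Refresh}, either $x_u^\delta = |N(u)|$ already holds (and then $x_u^{\gamma'} = |N(u)|$ by the first observation), or $x_u^\delta \neq |N(u)|$, in which case, by the second observation, the rule executed by $u$ in $\delta \to \delta'$ can only be \textbf{Refresh}, so $x_u^{\delta'} = |N(u)|$ and again $x_u^{\gamma'} = |N(u)|$. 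In all cases $x_u^{\gamma'} = |N(u)|$; since $u$ was an arbitrary non-Byzantine node, $\gamma'$ is degree-stabilized.

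The argument is short, so the only step needing care is the interaction with the definition of a round: one has to use the disjunction ``activated or became non-activable'', and in the ``activated'' branch one must exclude the possibility that \textbf{Candidacy?} or \textbf{Withdrawal} fired before any \textbf{Refresh} — which is precisely what the second observation rules out. I do not expect any genuine obstacle beyond bookkeeping.
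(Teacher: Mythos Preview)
Your proof is correct and follows the same approach as the paper's: fix a non-Byzantine node, split on whether $x_u = |N(u)|$ initially, and in the negative case use that $u$ stays activable (only \textbf{Refresh} is enabled) until it is activated, at which point \textbf{Refresh} sets $x_u := |N(u)|$ permanently. You are simply more explicit than the paper about the round-definition disjunction and about why the other two rules cannot fire first.
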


\begin{proof}
Let $u$ be a non-byzantine node.

If $x_{u}^{\gamma}=deg(u)$, no activation of rule can change that thus $x_{u}^{\gamma'}=deg(u)$.

If $x_{u}^{\gamma}\not = deg(u)$, then $u$ is activable in $\gamma$ and remains so until it is activated. Rule \textbf{Refresh} is then executed on $u$ in the first round, and since no rule can change the value of $x_u$ afterward we have $x_{u}^{\gamma'} = deg(u)$.
\end{proof}


%
%


All locally alone nodes in $V_{1}$ (\emph{i.e.} nodes of $I_\gamma$) remain locally alone during the whole execution. 

\begin{lemma} \label{bz-lem-I-inc}

If $\gamma \rightarrow \gamma'$, $I_{\gamma} \subseteq I_{\gamma'}$.
\end{lemma}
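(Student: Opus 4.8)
The claim is that the set $I_\gamma$ of locally alone nodes in $V_1$ can only grow along a transition $\gamma \to \gamma'$. The plan is to take an arbitrary $u \in I_\gamma$ and show $u \in I_{\gamma'}$, by checking that neither $u$ nor any of its neighbours can perform a move in the transition that would break the two defining conditions ``$s_u = \top$'' and ``$\forall v \in N(u),\ s_v = \bot$''. Note first that since $u \in V_1$, $u$ and all its neighbours are non-Byzantine, so every change to their $s$-values in $\gamma \to \gamma'$ is governed by the three rules of the algorithm.

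First I would argue that $u$ itself does not change $s_u$. In $\gamma$ we have $s_u^\gamma = \top$, so the only rule that could be enabled on $u$ and alter $s_u$ is \textbf{Withdrawal}, whose guard requires $\exists v \in N(u),\ s_v^\gamma = \top$. But $u \in I_\gamma$ means all neighbours of $u$ have $s$-value $\bot$ in $\gamma$, so \textbf{Withdrawal} is disabled on $u$, and \textbf{Candidacy?} is disabled too (it needs $s_u = \bot$); hence $s_u^{\gamma'} = \top$. Second, I would argue that no neighbour $v \in N(u)$ sets $s_v$ to $\top$. Such a change could only come from \textbf{Candidacy?} executed on $v$, whose guard requires $s_w^\gamma = \bot$ for all $w \in N(v)$; but $u \in N(v)$ and $s_u^\gamma = \top$, so \textbf{Candidacy?} is disabled on every neighbour of $u$ in $\gamma$. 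Therefore each $v \in N(u)$ has $s_v^{\gamma'} = \bot$ (it was $\bot$ in $\gamma$ and could only have been changed by \textbf{Candidacy?}, which was disabled). Combining the two, $s_u^{\gamma'} = \top$ and $s_v^{\gamma'} = \bot$ for all $v \in N(u)$, and since $V_1$ depends only on the graph, $u \in V_1$ still; hence $u \in I_{\gamma'}$.

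This argument is essentially a case check on the three rules, and I do not anticipate a serious obstacle — the only subtlety worth stating carefully is that all the relevant nodes lie in $V_1$ and are therefore non-Byzantine, so that the guards of the algorithm genuinely constrain their behaviour; a neighbour of $u$ that were Byzantine could of course set its $s$-value to $\top$ arbitrarily. Since $u \in V_1$ means $d(u,B) > 1$, i.e. every $v \in N(u)$ satisfies $d(v,B) \geq 1 > 0$, so indeed $N[u] \subseteq V_0$ and all rules apply as written.
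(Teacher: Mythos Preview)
Your proof is correct and follows essentially the same approach as the paper: fix $u\in I_\gamma$, check that neither \textbf{Withdrawal} nor \textbf{Candidacy?} is enabled on $u$ (so $s_u$ stays $\top$), and that \textbf{Candidacy?} is disabled on each neighbour $v$ of $u$ (so $s_v$ stays $\bot$). Your explicit remark that $u\in V_1$ forces $N[u]\subseteq V_0$, hence all relevant nodes obey the algorithm's guards, is a point the paper leaves implicit and is worth keeping.
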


\begin{proof}
Let's consider $u\in I_{\gamma}$. The only rules that may be enabled on $u$ in $\gamma$ is \textbf{Refresh} since \textbf{Candidacy?} can't be executed on $u$ because $s_u^{\gamma} = \top$, and \textbf{Withdrawal} can't be executed on $u$ because $\forall v\in N(u), s_u^{\gamma} = \bot$. We have then $s_{u}^{\gamma'} = \top$.

Now let's consider $v \in N(u)$. By definition of $I_{\gamma}$, $s_{v}^{\gamma}=\bot$, and $v$ has $u$ as a neighbour that has value $\top$ in $\gamma$. Then the only rule that may be enabled on $v$ in $\gamma$ is \textbf{Refresh}, and we have $s_{v}^{\gamma'}=\bot$.

Thus, $u \in I_{\gamma'}$.
\end{proof}

We now focus on the progression properties of enabled rules after one round.  We start with the
\textbf{Withdrawal} rule.  When the  \textbf{Withdrawal} rule is enabled on a node $u \in V_{1}$,  the conflict is solved by either making $u$ locally alone  or setting $s_{u} = \bot $.

\begin{lemma} \label{bz-lem-withdrawal}
If $\gamma$ is a degree-stabilized configuration and if \textbf{Withdrawal} is enabled on $u\in V_1$ 
then after one round either \textbf{Withdrawal} has been executed on $u$ or in the resulting configuration $\gamma'$ we have $u \in I_{\gamma'}$.
\end{lemma}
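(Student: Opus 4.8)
The plan is to combine the persistence of degree-stabilization, the fact that a node can only modify its own variables, the monotonicity statement of Lemma~\ref{bz-lem-I-inc}, and a case analysis driven by the round definition. As a preliminary, since $\gamma$ is degree-stabilized, every configuration occurring during the round under consideration is degree-stabilized too (by the first degree-stabilization lemma), so \textbf{Refresh} is never enabled on any non-Byzantine node, in particular never on $u$, throughout the round. Moreover, since the command of a rule only changes the variables of the node on which it is executed, the value $s_u$ stays $\top$ (the value forced by \textbf{Withdrawal} being enabled on $u$ in $\gamma$) for as long as $u$ has not been activated in the round. I would then distinguish whether or not $u$ gets activated during the round.

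If $u$ is activated at least once during the round, consider the first transition of the round in which $u$ is activated, via some move $(u,r)$. Just before that transition, $u$ has not yet been activated in the round, so $s_u = \top$ still holds; hence \textbf{Candidacy?} is not enabled on $u$, and \textbf{Refresh} is not enabled either by the preliminary remark. Since $(u,r)$ is a possible move, $r$ is enabled on $u$ at that moment, and the only remaining candidate is $r = \textbf{Withdrawal}$; thus \textbf{Withdrawal} has been executed on $u$ during the round, which is the first alternative.

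Otherwise $u$ is never activated during the round. By the round definition (condition (ii)), the round only ends once $u$ has been non-activable in some configuration $\gamma''$ occurring during the round. In $\gamma''$ we have $x_u = |N(u)|$ (degree-stabilized) and $s_u = \top$ ($u$ never activated), so neither \textbf{Refresh} nor \textbf{Candidacy?} is enabled on $u$; for \textbf{Withdrawal} to be disabled as well, we must have $s_v^{\gamma''} = \bot$ for every $v \in N(u)$. As $u \in V_1$, this is exactly the statement $u \in I_{\gamma''}$. Finally, the last configuration $\gamma'$ of the round is reachable from $\gamma''$, so iterating Lemma~\ref{bz-lem-I-inc} along that execution gives $I_{\gamma''} \subseteq I_{\gamma'}$, hence $u \in I_{\gamma'}$, which is the second alternative.

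The part requiring the most care is the bookkeeping around the round definition: in the ``never activated'' branch one must invoke precisely condition (ii) to obtain a configuration $\gamma''$ in which $u$ is non-activable and then read off $u\in I_{\gamma''}$, while in the ``activated'' branch one must argue that the first activation of $u$ cannot be a spurious \textbf{Refresh} or \textbf{Candidacy?} move — this is exactly where degree-stabilization and the fact that $s_u$ is frozen until $u$ acts are used. The rest is routine.
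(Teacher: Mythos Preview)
Your proof is correct and follows the same case split as the paper. The only notable difference is in the ``never activated'' branch: the paper argues directly that once each neighbour of $u$ drops to $\bot$ it stays there for the remainder of the round (since $s_u=\top$ throughout blocks \textbf{Candidacy?} on every neighbour, and $u\in V_1$ rules out Byzantine neighbours), whereas you locate an intermediate configuration $\gamma''$ with $u\in I_{\gamma''}$ and then invoke Lemma~\ref{bz-lem-I-inc} to carry this forward to $\gamma'$ --- a slightly cleaner piece of bookkeeping that reuses the monotonicity lemma instead of re-arguing persistence of the neighbours' $\bot$-values by hand.
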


\begin{proof}
Since $\gamma$ is degree-stabilized, no \textbf{Refresh} move can be executed in any future transition. Then, since $s_u^{\gamma}=\top$, only \textbf{Withdrawal} can be executed on $u$ or any of its non-Byzantine neighbours until $u$ has been activated. Since $u \in V_1$, it is in fact true for every neighbour of $u$.


\noindent Since $u$ is activable in $\gamma$, we have two cases:
(i) If $u$ has performed a \textbf{Withdrawal} move in the next round there is nothing left to prove.
(ii) If it is not the case, $u$ must have been unactivated by fairness hypothesis, that means that each neighbour $v\in N(u)$ that had $s$-value $\top$ in $\gamma$ have been activated. By the above, they must have performed a \textbf{Withdrawal} move that changed their $s$-value to $\bot$. Also $u$ is supposed not to have performed any rule, so it keeps $s$-value $\top$ in the whole round: its neighbours -that are non-Byzantine since $u\in V_1$- cannot perform any \textbf{Candidacy} move. As such, in the configuration $\gamma'$ at the end of the round, every neighbour of $u$ has $s$-value $\bot$, and $u$ has $s$-value $\top$. Since $u\in V_1$ that means that $u \in I_{\gamma'}$. 
\end{proof}

When    \textbf{Candidacy?} rule is enabled on a node $u \in V_{1}$, then \textbf{Candidacy?} is executed on $v \in N[u]$ within one round.

\begin{lemma} \label{bz-lem-candidacyworks}
If in $\gamma$ degree-stabilized we have the \textbf{Candidacy?} rule enabled on $u \in V_1$ 
 (\emph{i.e.}, $s_{u}^{\gamma} = \bot$ and $\forall v \in N(u), s_{v}^{\gamma} = \bot$) then after one round \textbf{Candidacy?} have either been executed on $u$, or on at least one neighbour of $u$.
\end{lemma}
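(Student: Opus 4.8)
The statement says: if \textbf{Candidacy?} is enabled on $u \in V_1$ in a degree-stabilized configuration $\gamma$, then within one round either \textbf{Candidacy?} is executed on $u$ or on one of its neighbours. The plan is to track what happens to $u$ during the round, arguing by contradiction that if \textbf{Candidacy?} is never executed on $u$ nor on any neighbour during the round, then $u$ stays continuously activable for \textbf{Candidacy?} without ever being deactivated — contradicting the round definition and the fairness of the daemon.

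First I would use Lemma~\ref{bz-lem-stab-1round}/the invariance lemma: since $\gamma$ is degree-stabilized, no \textbf{Refresh} move is ever executed in the round, so every activation of $u$ or of a non-Byzantine neighbour of $u$ is a \textbf{Candidacy?} or a \textbf{Withdrawal} move. Since $u \in V_1$, all neighbours of $u$ are non-Byzantine, so this applies to every neighbour. Now suppose for contradiction that throughout the round \textbf{Candidacy?} is executed neither on $u$ nor on any $v \in N(u)$. Then the only moves available to $u$ and its neighbours are \textbf{Withdrawal} moves, which only set $s$-values from $\top$ to $\bot$ and never from $\bot$ to $\top$. Since $s_u^\gamma = \bot$ and $s_v^\gamma = \bot$ for all $v \in N(u)$, and no \textbf{Candidacy?} is executed, all these $s$-values remain $\bot$ for the whole round. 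Hence the guard of \textbf{Candidacy?} on $u$ — namely $x_u = |N(u)|$ (true since degree-stabilized), $s_u = \bot$, and $s_v = \bot$ for all $v \in N(u)$ — stays true throughout the round, so $u$ is continuously activable and is never deactivated.

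By the round definition (Definition~\ref{my_round_df}), the round ends only once $u$ has either been activated or been non-activable at some configuration during the round; the contradiction hypothesis rules out the first, and the paragraph above rules out the second. So I would conclude that the round cannot have ended — more precisely, fairness forces $u$ to eventually be activated, and since its only enabled rule throughout is \textbf{Candidacy?}, that activation is a \textbf{Candidacy?} move on $u$, contradicting the assumption. Therefore \textbf{Candidacy?} is executed on $u$ or on some neighbour of $u$ within the round.

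The main subtlety — really the only one — is making sure that no move of $u$ or of a neighbour can re-enable or disable anything unexpectedly: one must check that \textbf{Withdrawal} moves (the only alternative to \textbf{Candidacy?}) preserve the property "all $s$-values in $N[u]$ are $\bot$" given that no \textbf{Candidacy?} fires, which is immediate since \textbf{Withdrawal} only writes $\bot$. The degree-stabilization hypothesis is what lets us exclude \textbf{Refresh} and hence guarantee that $x_u = |N(u)|$ does not get temporarily broken, which is exactly why the guard of \textbf{Candidacy?} cannot be falsified except by some node in $N[u]$ turning its $s$-value to $\top$ — i.e.\ exactly by a \textbf{Candidacy?} move in $N[u]$.
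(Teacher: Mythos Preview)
Your proof is correct and follows essentially the same approach as the paper's: both use degree-stabilization to rule out \textbf{Refresh}, use $u\in V_1$ to ensure all neighbours are non-Byzantine, observe that with all $s$-values in $N[u]$ equal to $\bot$ the only enabled rule on these nodes is \textbf{Candidacy?}, and then invoke the round/fairness definition. The only cosmetic difference is that you frame it as a contradiction argument (assume no \textbf{Candidacy?} fires in $N[u]$, show $u$ stays perpetually activable), whereas the paper does a direct two-case split on whether $u$ itself is activated or is deactivated by a neighbour's activation.
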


\begin{proof}
Since $\gamma$ is degree-stabilized, no \textbf{Refresh} move can be executed in any future transition. Then, until $u$ or one of its neighbours have been activated, only \textbf{Candidacy?} can be executed on them since it's the only rule that can be activated on a node with $s$-value $\bot$.

\noindent Since $u$ is activable in $\gamma$, by fairness, we have two cases:

(i) If $u$ is activated before the end of the round, the only rule that it could have performed for its first activation is the \textbf{Candidacy?} rule since its $s$-value in $\gamma$ is $\bot$ and the configuration is supposed degree-stabilized.

(ii) If not, it has been unactivated, which means that at least one neighbour $v\in N(u)$ has been activated. As $u\in V_1$, $v$ cannot be Byzantine. The only rule that it could have performed for its first activation is the \textbf{Candidacy?} rule since its $s$-value in $\gamma$ is $\bot$ and the configuration is supposed degree-stabilized. 
\end{proof}

\begin{fact}\label{fact:geinve}
$\forall k \in \mathbb{N}, (1-\frac{1}{k+1})^{k} > e^{-1}$
\end{fact}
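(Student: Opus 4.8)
The plan is to reduce the claimed inequality to the classical fact that $\left(1+\tfrac1k\right)^{k} < e$ for every positive integer $k$, and then to derive that fact from the elementary bound $e^{t} > 1+t$ for $t>0$.

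First I would dispose of the boundary case $k=0$: there $\left(1-\tfrac1{k+1}\right)^{k} = 0^{0} = 1 > e^{-1}$ with the usual convention (and if one takes $\mathbb{N}$ to start at $1$ this case simply does not occur). For $k \geq 1$, I would rewrite the left-hand side as
\[
  \left(1-\frac{1}{k+1}\right)^{k} = \left(\frac{k}{k+1}\right)^{k} = \left(1+\frac1k\right)^{-k},
\]
so that, both sides being positive, the target inequality $\left(1+\tfrac1k\right)^{-k} > e^{-1}$ is equivalent after taking reciprocals to $\left(1+\tfrac1k\right)^{k} < e$. To prove the latter, apply $e^{t} > 1+t$ (valid for every real $t\neq 0$) with $t=\tfrac1k>0$, giving $e^{1/k} > 1+\tfrac1k$; raising both sides to the $k$-th power, which preserves strict inequality since $x\mapsto x^{k}$ is strictly increasing on the positive reals, yields $e = \bigl(e^{1/k}\bigr)^{k} > \left(1+\tfrac1k\right)^{k}$, as required.

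I do not anticipate any real obstacle here. The only two points deserving a word of justification are the strictness of $e^{t} > 1+t$ for $t>0$ — which follows from the power series $e^{t} = 1 + t + \sum_{j\geq 2} t^{j}/j!$, whose remaining terms are all positive, or from the convexity of $\exp$ — and the treatment of the degenerate value $k=0$. Should a self-contained argument avoiding the exponential be preferred, one can instead invoke the standard monotonicity facts that $a_k := (1+\tfrac1k)^{k}$ is increasing and $b_k := (1+\tfrac1k)^{k+1}$ is decreasing with common limit $e$, whence $a_k < e$ for all $k$; but the one-line route through $e^{t}>1+t$ is cleaner and is what I would write.
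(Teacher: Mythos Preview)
Your proof is correct and is essentially the same as the paper's: both handle $k=0$ directly and, for $k\geq 1$, rely on the elementary strict inequality between the exponential and its tangent line at zero (you use $e^{t}>1+t$ and raise to the $k$-th power, the paper uses the equivalent logarithmic form $\ln x > 1-\tfrac{1}{x}$ at $x=\tfrac{k}{k+1}$ and multiplies by $k$ before exponentiating). The only cosmetic difference is that you first pass to the reciprocal to reduce to the familiar $(1+\tfrac1k)^{k}<e$, whereas the paper works directly with $\bigl(\tfrac{k}{k+1}\bigr)^{k}$.
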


\begin{proof}
For $k=0$ it is true ($(1-\frac{1}{0+1})^{0} = 1 > \frac{1}{e}$).

Suppose now that $k \geq 1$.
A basic inequality about $\ln$ is that $\forall x\in \mathbb{R}^{+*}, \ln(x) \geq 1-x$, with equality only when $x=1$.
For $x=\frac{k}{k+1}$, it gives us $\ln \left(\frac{k}{k+1} \right) \geq 1 - \frac{k+1}{k}= -\frac{1}{k}$, with equality only when $\frac{k}{k+1} = 1$. But since $\frac{k}{k+1}$ cannot have value $1$ for any value of $k$, we have then $\ln \left(\frac{k}{k+1} \right) > -\frac{1}{k}$.

Then, by multiplying by $k$ on each side, we have $k \ln \left(\frac{k}{k+1} \right) > -1$, and thus, taking the exponential:
$$\left(1 - \frac{1}{k+1} \right)^k = e^{k \ln \left(\frac{k}{k+1} \right)} > e^{-1}$$
\end{proof}

\noindent If  node $u\in V_{1}$ executes   \textbf{Candidacy?} rule, then $u$ becomes a locally alone node with  a certain probability in the next configuration.  So it implies that set $I$~grows.

\begin{lemma}\label{bz-lem-main}
If $\gamma$ is a degree-stabilized configuration, and in the next transition rule \textbf{Candidacy?} is executed on a node $u\in V_1$ 
, there is probability at least $\frac{1}{e(\Delta+1)}$ that in the next configuration $\gamma'$, $s_u^{\gamma'}=\top$ and $\forall v\in N(u), s_v^{\gamma'}=\bot$. 
\end{lemma}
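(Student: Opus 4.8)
The plan is to condition on the event that \textbf{Candidacy?} is executed on $u\in V_1$ in the transition $\gamma \to \gamma'$, and to bound from below the probability of the joint event ``$u$ draws $1$ in its call to $Rand$'' and ``no neighbour of $u$ that also executes \textbf{Candidacy?} in this transition draws $1$''. If both happen, then $s_u^{\gamma'}=\top$ and every $v\in N(u)$ either kept its value $\bot$ (if it did not execute \textbf{Candidacy?}, or executed it and drew $0$) or — impossible — would have drawn $1$; so the conclusion $u\in I_{\gamma'}$ follows. Note all neighbours of $u$ are non-Byzantine since $u\in V_1$, so they all genuinely follow the algorithm, and moreover the guard of \textbf{Candidacy?} on $u$ forces $s_v^{\gamma}=\bot$ for all $v\in N(u)$, so the only rule any such $v$ can execute in this transition is \textbf{Candidacy?} (the configuration is degree-stabilized, ruling out \textbf{Refresh}).

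First I would fix the random draws as independent: each node $w$ executing \textbf{Candidacy?} in this transition calls $Rand(p_w)$ with $p_w = \frac{1}{1+\max\{x_v : v\in N[w]\}}$, and these calls are mutually independent. Let $S\subseteq N[u]$ be the (possibly random, but determined by the daemon's choice of $t$, hence fixable by conditioning) set of nodes executing \textbf{Candidacy?}, with $u\in S$. The target event contains the event $A$ = ``$u$ draws $1$ and every $v\in S\cap N(u)$ draws $0$'', whose probability is $p_u \cdot \prod_{v\in S\cap N(u)}(1-p_v)$. Since $\gamma$ is degree-stabilized, $x_v = deg(v)$ for every non-Byzantine $v$, in particular for $u$ and all its neighbours; hence $p_u = \frac{1}{1+\max\{deg(v):v\in N[u]\}} \ge \frac{1}{\Delta+1}$, using $deg(v)\le\Delta$. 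For the product: for each $v\in S\cap N(u)$ we have $u\in N[v]$, so $\max\{x_w : w\in N[v]\} \ge x_u = deg(u)$, whence $1-p_v \ge 1 - \frac{1}{1+deg(u)}$. Therefore
\[
\prod_{v\in S\cap N(u)}(1-p_v) \;\ge\; \left(1-\frac{1}{1+deg(u)}\right)^{|S\cap N(u)|} \;\ge\; \left(1-\frac{1}{1+deg(u)}\right)^{deg(u)} \;>\; e^{-1},
\]
where the middle inequality uses $|S\cap N(u)| \le |N(u)| = deg(u)$ and $1-\frac{1}{1+deg(u)} < 1$, and the last is Fact~\ref{fact:geinve} with $k=deg(u)$.

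Combining, $\Pr[A] \ge \frac{1}{\Delta+1}\cdot e^{-1} = \frac{1}{e(\Delta+1)}$, and since $A$ implies $s_u^{\gamma'}=\top \wedge \forall v\in N(u),\, s_v^{\gamma'}=\bot$, this gives the claimed bound. The main thing to be careful about — the only real obstacle — is the bookkeeping around which neighbours of $u$ actually move and with what probability: one must use degree-stabilization to pin $x$-values to degrees (so the guard of \textbf{Candidacy?} and the definition of $Rand$'s argument line up), use $u\in V_1$ to guarantee those neighbours are non-Byzantine and hence obey the stated probabilities, and observe that a neighbour $v$ not executing \textbf{Candidacy?} simply retains $s_v=\bot$ and causes no harm, so conditioning on the daemon's move set $t$ is legitimate and the bound is uniform over all such $t$.
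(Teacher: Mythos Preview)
Your proposal is correct and follows essentially the same approach as the paper's proof: both use degree-stabilization and $u\in V_1$ to identify $x_v$ with $\deg(v)$ on $N[u]$, bound $p_u \ge \frac{1}{\Delta+1}$, bound each neighbour's success probability by $\frac{1}{1+\deg(u)}$ via $x_u \in \{x_w : w\in N[v]\}$, and then apply Fact~\ref{fact:geinve} to the product. Your version is slightly more explicit about the set $S$ of activated neighbours and the intermediate inequality $|S\cap N(u)|\le \deg(u)$, but the structure is identical.
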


\begin{proof}
For any node $y$, we write $\varphi(y) =  \frac{1}{1+max(\quickset{deg(v) | v \in N[y]})}$.

Since \textbf{Candidacy?} can be executed on $u$ in $\gamma$, we know that $\forall v \in N[u], s_{v}^{\gamma} = \bot$. Since $\gamma$ is degree-stabilized and no neighbour of $u$ can be Byzantine by definition of $V_1$, we have $\forall v \in N[u], x_{v}^{\gamma} = deg(v)$. The probability of $s_u^{\gamma'}=\top$ knowing the rule has been executed is then $\varphi(u)$.

Then, for a given $v \in N(u)$, node $v$ is not Byzantine since $u \in V_1$, and the probability that $s_v^{\gamma'} = \top$ is either $0$ (if \textbf{Candidacy?} has not been executed on $v$ in the transition) or $\frac{1}{1+\max(\quickset{x_w^{\gamma} | w \in N[v]})}$. Since $deg(u)= x_u \leq \max(\quickset{x_w^{\gamma} | w \in N[v]})$, that probability is then at most $\frac{1}{1+deg(u)}$.

Thus (since those events are independents), the probability for $u$ to be candidate in $\gamma'$ without candidate neighbour is at least 
\\[-1em]
$$p = \varphi(u)\prod_{v \in N(u)} \left(1-\frac{1}{1+deg(u)} \right) \geq \frac{1}{\Delta+1} \left(1-\frac{1}{1+deg(u)} \right)^{deg(u)}$$
\\[-,5em]
\noindent Then, as $\forall k \in \mathbb{N}, (1-\frac{1}{k+1})^{k} > e^{-1}$ (see Fact~\ref{fact:geinve}),\\ 
$$\hspace*{8cm} p > \frac{1}{\Delta+1} \times \frac{1}{e}$$ 
 and the lemma holds. 

\end{proof}


\begin{lemma} \label{bz-lem-3way}
If $\gamma$ is a degree-stabilized configuration such that $I_{\gamma}$ is not a maximal independent set of $V_2 \cup I_{\gamma}$, then after at most one round one of the following events happens:\\
\begin{enumerate}
\item Rule \textbf{Candidacy?} is executed on a node of $V_1$
\item A configuration $\gamma'$ such that $I_{\gamma} \subsetneq I_{\gamma'}$ is reached.
\item A configuration $\gamma'$ such that rule \textbf{Candidacy?} is enabled on a node of $V_2$ in $\gamma'$ is reached.
\end{enumerate}
\end{lemma}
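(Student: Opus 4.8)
The plan is to prove the statement by contradiction. Assume $\gamma$ is degree-stabilized, that $I_\gamma$ is \emph{not} a maximal independent set of $V_2\cup I_\gamma$, and that \emph{none} of the three listed events occurs during the first round of the execution starting at $\gamma$; I will derive a contradiction. Two standing facts are used throughout. First, any configuration reachable from a degree-stabilized configuration is degree-stabilized, so \textbf{Refresh} is never executed on a non-Byzantine node during the round and $x_v=deg(v)$ for every non-Byzantine $v$ in every configuration of the round. Second, $I_\gamma$ is an independent set contained in $V_2\cup I_\gamma$ (two nodes of $I_\gamma$ cannot be adjacent, each being locally alone), so its failure to be maximal yields a witness node $w\in V_2\setminus I_\gamma$ with $N(w)\cap I_\gamma=\emptyset$; note that $w$ and all of its neighbours lie in $V_1$.

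The core of the argument extracts two consequences from the failure of the events. From the failure of event~1, \textbf{Candidacy?} is never executed on a node of $V_1$ during the round; together with degree-stabilization this means every node of $V_1$ can only execute \textbf{Withdrawal} during the round, so its $s$-value can only change from $\top$ to $\bot$ — in particular no node of $V_1$ ever acquires $s$-value $\top$ during the round, and a $V_1$ node with $s$-value $\bot$ keeps it until the end of the round. From the failure of events~2 and~3, together with $I_\gamma\subseteq I_{\gamma_i}$ along the round (Lemma~\ref{bz-lem-I-inc}), one gets $I=I_\gamma$ in every configuration of the round and \textbf{Candidacy?} never enabled on $w$. I then claim that $w$ has a $\top$-valued neighbour in \emph{every} configuration of the round: otherwise, in such a configuration, either $s_w=\bot$, whence \textbf{Candidacy?} is enabled on $w\in V_2$ (contradicting the failure of event~3), or $s_w=\top$, whence $w$ is locally alone and $w\in I=I_\gamma$ (contradicting $w\notin I_\gamma$).

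It remains to contradict this claim. The set $S$ of $\top$-valued neighbours of $w$ in $\gamma$ is then non-empty, and each $v\in S$ satisfies $v\in V_1$, $s_v^\gamma=\top$ and $v\notin I_\gamma$ (else $v\in N(w)\cap I_\gamma$), so $v$ is not locally alone and \textbf{Withdrawal} is enabled on $v$ in $\gamma$. By Lemma~\ref{bz-lem-withdrawal}, for each $v\in S$ either \textbf{Withdrawal} is executed on $v$ during the round or $v$ belongs to the set $I$ of the end-of-round configuration, which equals $I_\gamma$; the latter being impossible, \textbf{Withdrawal} is executed on every $v\in S$ at some transition of the round. Consider the configuration $\gamma^\star$ immediately after the \emph{last} transition of the round in which a node of $S$ executes \textbf{Withdrawal}: by then every node of $S$ has $s$-value $\bot$ and, being in $V_1$, has not regained $\top$, while every neighbour of $w$ outside $S$ has kept its value $\bot$; hence $w$ has no $\top$-valued neighbour in $\gamma^\star$, contradicting the claim. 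This contradiction proves the lemma.

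The main obstacle is the second paragraph, specifically the interplay between ``$w$ must permanently keep a $\top$-valued neighbour, or else an event fires'' and ``in the absence of event~1, $\top$-valued $V_1$ nodes can only disappear''. Together these make the $\top$-valued neighbours of $w$ doomed, which is what forces the contradiction; the rest is routine bookkeeping, provided one is careful to invoke the fairness/round assumption only through Lemma~\ref{bz-lem-withdrawal} (and through the basic fact that an activable node is activated or deactivated within the round) and to keep using that a $V_1$ node whose $s$-value is $\bot$ cannot change it during the round.
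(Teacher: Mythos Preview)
Your proof is correct and takes a genuinely different route from the paper's. The paper argues by direct case analysis: it first checks whether \textbf{Candidacy?} is already enabled on some node of $V_2$ in $\gamma$ (Condition~3), and otherwise asserts the existence of a node $u\in V_2$ with $s_u^\gamma=\top$ having a $\top$-valued neighbour, then tracks that single $u$ through the round via Lemma~\ref{bz-lem-withdrawal}, branching on whether $u$ is activated once, twice, or ends the round in $I$. You instead argue by global contradiction: you take the non-maximality witness $w\in V_2\setminus I_\gamma$ with $N(w)\cap I_\gamma=\emptyset$, extract from the failure of event~1 the monotonicity invariant ``no $V_1$ node acquires $s$-value $\top$ during the round'', and then use Lemma~\ref{bz-lem-withdrawal} on the whole set $S$ of $\top$-valued neighbours of $w$ to force them all to $\bot$, leaving $w$ with no $\top$-neighbour and firing event~2 or~3.

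What each approach buys: the paper's argument is more local, following one concrete node, but its key existence claim (a $\top$-valued $u\in V_2$ with a $\top$-neighbour) is not obviously justified when the witness $w$ has $s_w^\gamma=\bot$ and its $\top$-valued neighbour lies in $V_1\setminus V_2$; your approach sidesteps this entirely because you never need the witness to be $\top$-valued. Your contradiction framing also makes the monotonicity invariant do most of the heavy lifting, which keeps the case structure flat. The only place where you lean on something not stated as a lemma is the observation that a degree-stabilized $V_1$ node with $s$-value $\bot$ cannot change it without executing \textbf{Candidacy?}; this is immediate from the guards and you state it explicitly, so it is fine.
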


\begin{proof}
Suppose $\gamma$ is a degree-stabilized configuration such that $I_{\gamma}\cup V_2$ is not a maximal independent set of $V_2$. As $\gamma$ is supposed degree-stabilized, we will only consider the possibility of moves that are not \textbf{Refresh} moves.
(a) If \textbf{Candidacy?} is enabled on a node of $V_2$ in $\gamma$, Condition~$3$ holds.
(b) If it is not the case, then there exists $u \in V_2$ that has at least one neighbour $v \in V_1$ such that $s_{u}^{\gamma} = s_{v}^{\gamma} = \top$  (otherwise $I_{\gamma}$ would be a maximal independent set of $V_2 \cup I_{\gamma}$).

In the first case there is nothing left to prove. In the second case, \textbf{Withdrawal} is enabled on $u \in V_2$ in $\gamma$ and from Lemma~\ref{bz-lem-withdrawal}, we have two possible cases: (i) if $\gamma'$ is the configuration after one round, $u \in I_{\gamma'}$ and Condition~$2$ holds ; (ii) $u$ perform a \textbf{Withdrawal} move before the end of the round.

In the first case there is nothing left to prove. Suppose now that we are in the second case and that $u$ is activated only once before the end of the round, without loss of generality since Condition~$1$ would hold otherwise as it second activation would be a \textbf{Candidacy?} move and $u \in V_1$. 
Then: \\
\begin{itemize} 
\item If within a round a configuration $\gamma'$ is reached where a node $w \in N[u]$ is such that $s_w^{\gamma'} = \top$ and $w$ is not activable we have: $w \not\in I_{\gamma}$ (as $u$ is $\top$-valued in $\gamma$) which gives $I_{\gamma} \subsetneq I_{\gamma} \cup \quickset{w} \subseteq I_{\gamma'}$ by Lemma~\ref{bz-lem-I-inc}, thus Condition~$2$ holds.

\item If we suppose then that no such event happens until the end of the round in configuration $\gamma'$, we are in either of those cases:
(i)  Every neighbour of $u$ have value $\bot$ in $\gamma'$ and and $s_u^{\gamma'} = \bot$ (as we would be in the previous case if it was $\top$), thus \textbf{Candidacy?} is enabled on $u$ in $\gamma'$ and Condition~$3$ holds.
(ii)  A \textbf{Candidacy?} has been performed within the round on a neighbour of $u$ and Condition~$1$ holds.
\end{itemize}
~\\
Thus, in every possible case, one of the three conditions holds. 
\end{proof}
~\\
\noindent  Every 2 rounds, the set of locally alone\,nodes strictly\,grows with\,some probability.

\begin{lemma} \label{bz-lem-advance}
If $\gamma$ is degree-stabilized, with $I_{\gamma}$ not being a maximal independent set of $V_2 \cup I_{\gamma}$, then after two rounds the probability for the new configuration $\gamma'$ to be such that $I_{\gamma} \subsetneq I_{\gamma'}$ is at least $\frac{1}{(\Delta+1)e}$.
\end{lemma}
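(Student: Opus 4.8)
The plan is to combine the three-way dichotomy of Lemma~\ref{bz-lem-3way} with the one-step progress estimate of Lemma~\ref{bz-lem-main}, spending the second round to ``close the gap'' in the two alternatives of Lemma~\ref{bz-lem-3way} that do not by themselves produce progress. First I would record the ambient facts used silently throughout: since $\gamma$ is degree-stabilized, every configuration reachable from it is degree-stabilized, so the lemmas whose hypothesis is ``degree-stabilized'' may be applied at any later configuration; since there is a Byzantine node every maximal execution is infinite, so ``the configuration $\gamma'$ after two rounds'' is well defined; and $I$ is monotone along any execution (Lemma~\ref{bz-lem-I-inc}, iterated). By the last point it suffices to exhibit, with probability at least $\frac{1}{(\Delta+1)e}$, a single configuration reached within the first two rounds at which $I$ strictly contains $I_\gamma$, since then $I_\gamma \subsetneq I_{\gamma'}$.

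Applying Lemma~\ref{bz-lem-3way} to $\gamma$, within the first round one of its three events occurs. In the second alternative a configuration with $I_\gamma \subsetneq I$ is reached within the first round, and by monotonicity we are done with probability $1$. In the remaining two alternatives I would reduce to the statement: \emph{within the first two rounds there is a transition executing rule \textbf{Candidacy?} on some node of $V_1$}. In the first alternative this is immediate. In the third alternative a configuration $\gamma_1$ with \textbf{Candidacy?} enabled on some $w\in V_2\subseteq V_1$ is reached during the first round; $\gamma_1$ is degree-stabilized, so Lemma~\ref{bz-lem-candidacyworks} applies at $\gamma_1$ and guarantees that one round later \textbf{Candidacy?} has been executed on $w$ or on a neighbour of $w$, which lies in $V_1$ because $w\in V_2$. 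Since $\gamma_1$ sits inside round~$1$ and the whole of round~$2$ lies after $\gamma_1$ within the two-round budget, the round counted from $\gamma_1$ terminates no later than the end of round~$2$; hence this \textbf{Candidacy?} move on a $V_1$-node indeed occurs within the first two rounds.

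It remains to convert ``a \textbf{Candidacy?} move on a $V_1$-node occurs within two rounds'' into the probability bound, against an arbitrary fair daemon. Let $W$ be the event that $I_\gamma\subsetneq I_\delta$ for some configuration $\delta$ in the first two rounds; by monotonicity $W=\{I_\gamma\subsetneq I_{\gamma'}\}$. Let $\sigma$ be the first transition within the first two rounds at which \textbf{Candidacy?} is executed on a node of $V_1$, and fix such a node $u\in V_1$. By the previous paragraph, if $\sigma$ is undefined then neither the first nor the third alternative of Lemma~\ref{bz-lem-3way} occurred, hence the second did and $W$ holds. If $\sigma$ is defined, condition on the history up to and including the move set chosen at $\sigma$ (but before resolving the coin flips of that transition): the configuration just before $\sigma$ is degree-stabilized, so Lemma~\ref{bz-lem-main} gives conditional probability at least $\frac{1}{e(\Delta+1)}$ that just after $\sigma$ we have $s_u=\top$ and all neighbours of $u$ equal to $\bot$, i.e. $u\in I$ there; and since \textbf{Candidacy?} was enabled on $u$ before $\sigma$ we had $s_u=\bot$, so $u\notin I_\gamma$ by monotonicity, whence $I_\gamma\subsetneq I_\gamma\cup\{u\}\subseteq I$ after $\sigma$ and $W$ holds. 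Averaging over the conditioning and over the event $\{\sigma\text{ defined}\}$ (which is measurable with respect to the conditioning), $P[W]\ge \frac{1}{e(\Delta+1)}$ on $\{\sigma\text{ defined}\}$ and $P[W]=1$ on $\{\sigma\text{ undefined}\}$, giving $P[W]\ge\frac{1}{(\Delta+1)e}$ overall.

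The main obstacle, I expect, is the round bookkeeping: one must check that the second invocation (Lemma~\ref{bz-lem-candidacyworks} in the third alternative, and the assertion that the guaranteed \textbf{Candidacy?} move lands inside the two-round budget) is legitimate even though the configuration $\gamma_1$ produced by Lemma~\ref{bz-lem-3way} may lie in the middle of round~$1$ rather than at a round boundary; the point is that a full round's worth of activity, namely round~$2$, still lies ahead of $\gamma_1$ within the budget, which forces the round counted from $\gamma_1$ to end by the end of round~$2$. The secondary care is in the probabilistic step: Lemma~\ref{bz-lem-main} bounds one fixed transition, so one must condition on the history up to the move set of the adaptively chosen transition $\sigma$ and use that the bound holds for that transition irrespective of how the daemon selected it.
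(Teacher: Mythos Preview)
Your proof is correct and follows essentially the same route as the paper: invoke Lemma~\ref{bz-lem-3way}, dispose of its second alternative immediately, and in the other two alternatives locate a \textbf{Candidacy?} move on a $V_1$-node within the two-round budget (directly in Case~1, via Lemma~\ref{bz-lem-candidacyworks} in Case~3) and then apply Lemma~\ref{bz-lem-main}. Your version is in fact more careful than the paper's on two points the paper leaves implicit: the round-accounting argument that a round measured from the intermediate configuration $\gamma_1$ completes before the end of the second global round, and the explicit conditioning on the history up to the adaptively chosen transition $\sigma$ so that Lemma~\ref{bz-lem-main} can be applied against an adversarial daemon.
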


\begin{proof}

From Lemma~\ref{bz-lem-3way}, we have three possibilities after one rounds.\\
\begin{itemize}
\item If we are in Case~$1$, let us denote by $\gamma''$ the resulting configuration after the transition where the said \textbf{Candidacy?} move have been executed on $u\in V_1$. Then using Lemma~\ref{bz-lem-main}, we have $u \in I_{\gamma''}$ with probability at least $\frac{1}{(\Delta+1)e}$. Since we have $u\not\in I_{\gamma}$  (if $u$ was in $I_{\gamma}$, \textbf{Candidacy?} could not have been executed on $u$ after configuration $\gamma$) and by Lemma~\ref{bz-lem-I-inc}, we have then $I_{\gamma} \subsetneq I_{\gamma'}$ with probability at least $\frac{1}{(\Delta+1)e}$.
\item If we are in Case~$2$, there is nothing left to prove.
\item If we are in Case~$3$, let us denote by $\gamma''$ the first configuration where \textbf{Candidacy?} is enabled on some $u\in V_2$. Then using Lemma~\ref{bz-lem-candidacyworks} after at most one more round \textbf{Candidacy?} will be executed on $v\in N[u]$. Let us denote by $\gamma''$ the resulting configuration after the transition where the said \textbf{Candidacy?} move have been executed on $v$. Since $u\in V_2$ we have $v\in V_1$ and using Lemma~\ref{bz-lem-main} $v \in I_{\gamma'''}$ with probability at least $\frac{1}{(\Delta+1)e}$ and $v \not\in I_{\gamma}$ by the same argument as above. Thus, since $I_{\gamma'''} \subseteq I_{\gamma'}$, the probability that $I_{\gamma} \subsetneq I_{\gamma'}$ is at least $\frac{1}{(\Delta+1)e}$.
\end{itemize}
~\\
In every case, the property is true, thus the lemma holds. 
\end{proof}

Since the number of locally alone nodes cannot be greater than $n$, the expected number of rounds before stabilization can be computed.
%
%
%
We will use the notation $\alpha = \frac{1}{(\Delta+1)e}$ to simplify the formulas in the remaining of the paper.

\begin{lemma} \label{bz-lem-speed}
For any $p\in [0,1[$. From any degree-stabilized configuration $\gamma$, the algorithm is self-stabilizing for a configuration $\gamma'$ where $I_{\gamma'}$ is a maximal independent set of $V_2 \cup I_{\gamma'}$. The time for this to happen is less than  $\max\left( -\alpha^{-2}\ln p,\frac{\sqrt{2}}{\sqrt{2}-1}\frac{n}{\alpha} \right)$ rounds with probability at least $1-p$.
\end{lemma}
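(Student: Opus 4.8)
The plan is to iterate Lemma~\ref{bz-lem-advance}: as long as the current configuration is not legitimate (i.e. $I_\gamma$ is not a maximal independent set of $V_2 \cup I_\gamma$), every two rounds the set $I_\gamma$ strictly grows with probability at least $\alpha = \frac{1}{(\Delta+1)e}$, and by Lemma~\ref{bz-lem-I-inc} it never shrinks. Since $|I_\gamma| \le n$ always, after the set has grown $n$ times we must have reached a legitimate configuration; correctness (that we stay legitimate afterwards) also follows from Lemma~\ref{bz-lem-I-inc} together with the characterization of $I_\gamma$ as the locally-alone nodes of $V_1$. So the whole statement reduces to a tail bound: how many rounds does it take for $n$ independent-ish ``successes'', each of probability $\ge \alpha$ over a window of $2$ rounds, to occur?

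First I would set up the stochastic domination carefully. Group the rounds into consecutive blocks of two. In each block, conditionally on not yet having reached a legitimate configuration, Lemma~\ref{bz-lem-advance} guarantees probability $\ge \alpha$ that $|I|$ increases by at least one; once legitimate, we are done. Hence the number $T$ of two-round blocks needed is stochastically dominated by a sum $S = \sum_{i=1}^{n} G_i$ of $n$ independent geometric random variables $G_i$ with success probability $\alpha$ (each $G_i \ge 1$), so the number of rounds is at most $2S$. I would then bound $\Pr[S > k]$. Two standard routes: (a) a Chernoff/moment-generating-function bound on the sum of geometrics, or (b) reframe $S > k$ as ``fewer than $n$ successes in $k$ Bernoulli($\alpha$) trials'' and apply a Chernoff bound on a binomial. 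The target bound $\max\!\left(-\alpha^{-2}\ln p, \tfrac{\sqrt 2}{\sqrt 2 - 1}\tfrac{n}{\alpha}\right)$ (in rounds, so divide by $2$ to get the block count, or absorb the factor $2$) has the shape of a Chernoff bound: the $n/\alpha$ term is (a constant times) the expectation $\mathbb E[S] = n/\alpha$, and the $-\alpha^{-2}\ln p$ term is the additive correction that kicks in when $p$ is very small relative to $e^{-n}$. The constant $\frac{\sqrt 2}{\sqrt 2 - 1}$ strongly suggests using the deviation parameter $\delta$ chosen so that $(1+\delta)$ or $1/(1-\delta)$ equals $\sqrt 2$, i.e. splitting into the regime $k \le 2\mathbb E[S]$ (use the quadratic-in-$\delta$ form of the Chernoff bound, giving the $-\alpha^{-2}\ln p$ term) versus $k \ge 2\mathbb E[S]$ (use the form linear in $\delta$, giving the $n/\alpha$ term).

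Concretely, I would: (1) fix the number of blocks $k$ and write $\Pr[S > k] \le \Pr[\mathrm{Bin}(k,\alpha) < n]$; (2) apply the lower-tail Chernoff bound $\Pr[\mathrm{Bin}(k,\alpha) < (1-\delta)k\alpha] \le \exp(-\delta^2 k\alpha/2)$ with $(1-\delta)k\alpha = n$; (3) solve $\exp(-\delta^2 k\alpha /2) \le p$ for $k$, obtaining a bound of the form $k \ge \frac{n}{\alpha} + \frac{c}{\alpha}\sqrt{\tfrac{n}{\alpha}(-\ln p)} + \frac{c'}{\alpha}(-\ln p)$ for suitable constants; (4) bound the middle cross-term by the arithmetic–geometric mean inequality $2\sqrt{ab} \le \lambda a + \lambda^{-1} b$, choosing $\lambda$ so the two resulting terms can be folded into $\frac{\sqrt 2}{\sqrt 2 - 1}\frac{n}{\alpha}$ and $-\alpha^{-2}\ln p$ respectively; (5) translate back to rounds (factor $2$) — note that $\alpha^{-2}$ rather than $\alpha^{-1}$ already absorbs this and the slack. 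I would double-check whether the claimed bound is actually on blocks or rounds and whether there is a factor of $2$ hidden in the $\alpha^{-2}$ (since $\alpha < 1$, $\alpha^{-2} > 2\alpha^{-1}$ whenever $\alpha < 1/2$, which always holds here as $\Delta \ge 1$).

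The main obstacle I expect is purely in the calculus of step (3)–(4): getting the Chernoff estimate to collapse exactly into the stated $\max(\cdot,\cdot)$ with those particular constants, rather than some messier expression. The probabilistic content — domination by a sum of geometrics and a union-with-the-event-of-already-being-done argument — is routine; the delicate point is the constant-chasing, in particular justifying the two-regime split at $\sqrt 2 \cdot \mathbb E$ and verifying that the cross term is genuinely dominated. A secondary subtlety is making the stochastic-domination step rigorous in the presence of the daemon and the Byzantine nodes: one must argue that the bound of Lemma~\ref{bz-lem-advance} holds conditionally on the entire history (the adversary and Byzantine choices so far), so that the block successes really do dominate independent Bernoulli trials regardless of adversarial correlations; this follows because Lemma~\ref{bz-lem-advance} is a worst-case statement over all degree-stabilized non-legitimate configurations and all daemon behaviours, but it is worth stating explicitly.
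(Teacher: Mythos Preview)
Your approach is correct in spirit and would yield a valid tail bound, but it differs from the paper's route. The paper does \emph{not} pass through a stochastic-domination-by-geometrics argument or a binomial Chernoff bound. Instead it works directly with the martingale $M_i = S_i - A_i$, where $S_i = \sum_{k\le i} Y_k$ with $Y_k$ the indicator that $|I|$ grew at step~$k$, and $A_i = \sum_{k\le i}\mathbb{E}[Y_k\mid \mathcal F_{k-1}]$ is its compensator; Lemma~\ref{bz-lem-advance} gives $\mathbb{E}[Y_k\mid \mathcal F_{k-1}] \ge \alpha$ before stabilization, and $S_i \le n$ always, so $\{\tau > i\} \subseteq \{M_i \le n - i\alpha\}$. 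Azuma's inequality on the bounded-increment martingale $M_i$ then yields $\mathbb P(\tau>i)\le \exp\!\bigl(-2(n-i\alpha)^2/i\bigr)$, and the constants $\frac{\sqrt 2}{\sqrt 2 - 1}$ and $\alpha^{-2}$ fall out of the elementary observation that $i \ge \frac{\sqrt 2}{\sqrt 2-1}\frac{n}{\alpha}$ forces $(n-i\alpha)^2 \ge \frac12 (i\alpha)^2$, whence the Azuma exponent is at least $i\alpha^2$. Your Chernoff-on-$\mathrm{Bin}(k,\alpha)$ route would produce an exponent of the shape $(k\alpha-n)^2/(2k\alpha)$ rather than $2(k\alpha-n)^2/k$; since $\alpha < \tfrac14$ this is actually \emph{tighter}, so your method proves a stronger bound but will not reproduce the stated constants exactly --- which is precisely the obstacle you flagged. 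Your remark on making the conditional lower bound rigorous against the adversarial daemon is well taken; the paper handles this implicitly by working with the filtration $(\mathcal F_i)$ and the compensator $A_i$, which is exactly the device that encodes ``Lemma~\ref{bz-lem-advance} holds conditionally on the whole history''.
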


\begin{proof}
Consider a degree-stabilized configuration $\gamma_0$, and $\Omega$ the set of all complete executions of the algorithm starting in configuration $\gamma_0$. The probability measure $\mathbb{P}$ is the one induced by the daemon choosing the worst possibility for us at every step.

Consider for $i \in \Nat$ the random variables $X_i$ that denotes the configuration after the $i$-th round has ended ($X_0$ is the constant random variable of value $\gamma_0$).

Consider $(\mathcal{F}_i)_{i\in \Nat}$ the natural filtration associated with $X_i$


Consider the function $f : \gamma \mapsto |I_{\gamma}|$.

$Y_i = \mathds{1}_{f(X_i) - f(X_{i-1}) > 0}$ is the random variable with value $1$ if the size of $I$ increased in the $i$-th round, else $0$.

Consider the stopping time $\tau$ (random variable describing the number of rounds the algorithm take to stabilize on $V_1$) defined by: $$\tau(\omega) = \inf\quickset{n\in \Nat | I \text{ does not change after round } n \text{ in execution } \omega}$$

As $Y_i$ has values in $\quickset{0;1}$, we have $\mathbb{P}(Y_i=1 | \mathcal{F}_{i-1}) = \mathbb{E}[Y_i | \mathcal{F}_{i-1}]$. Also, from Lemma~\ref{bz-lem-advance} we get $\mathbb{P}(Y_i=1 | \mathcal{F}_{i-1}) \geq \alpha \cdot \mathds{1}_{\tau \geq i-1}$. Thus combining the two relations we get:

\begin{equation} \label{eqincr}
\mathbb{E}[Y_i | \mathcal{F}_{i-1}] \geq \alpha \cdot \mathds{1}_{\tau \geq i-1}
\end{equation}

Consider $S_i = \displaystyle \sum_{k=1}^{i}Y_k$ the random variable representing the number of rounds where there have been an increment. Since this cannot happen more time than there are nodes, we get:

\begin{equation} \label{eqS}
S_i \leq n
\end{equation}

Consider $A_i = \displaystyle \sum_{k=1}^{i} \mathbb{E}[Y_{k} | \mathcal{F}_{k-1}]$ the random variable representing the sum of the expected values of the increments at each step.

When $\tau > i$, every for very value of $k \in \segment{1}{i}$ we have $\mathds{1}_{\tau \geq k-1} = 1$. Then using (\ref{eqincr}) we get:

\begin{equation} \label{eqA}
\tau > i \Rightarrow A_i \geq i\alpha
\end{equation}

Consider then the random variable $M_i = \sum_{k=1}^{i} Y_k - \mathbb{E}[Y_k|\mathcal{F}_{k-1}]$ (do note that it is the same as the difference $S_i - A_i$).
\begin{align*}
\mathbb{E}\left[ M_{i+1}|\mathcal{F}_i \right] &= \mathbb{E} \left[ \sum_{k=1}^{i+1} Y_k - \mathbb{E}[Y_k|\mathcal{F}_{k-1}] \middle| \mathcal{F}_i \right] \\
&= \mathbb{E} \left[ M_i + Y_{i+1} - \mathbb{E}[Y_{i+1}|\mathcal{F}_{i}] \middle| \mathcal{F}_i \right] \\
&= \mathbb{E} \left[ M_i \middle| \mathcal{F}_i \right]+ \mathbb{E} \left[Y_{i+1}\middle| \mathcal{F}_i \right] - \mathbb{E} \left[\mathbb{E}[Y_{i+1}|\mathcal{F}_{i}] \middle| \mathcal{F}_i \right] \\
&= M_i + \mathbb{E} \left[Y_{i+1} \middle| \mathcal{F}_i \right] - \mathbb{E} \left[Y_{i+1} \middle| \mathcal{F}_i \right] \\
&= M_i
\end{align*}

Thus $(M_i)_{i\in \Nat}$ is a martingale with respect to the filtration $(\mathcal{F}_i)_{i\in \Nat}$.

We also have $|M_{i+1} - M_i| = |Y_{i+1} - \mathbb{E}[Y_{i+1} | \mathcal{F}_i]| \leq \max(Y_{i+1},\mathbb{E}[Y_{i+1} | \mathcal{F}_i]) \leq 1$.
\newpage
Thus by Azuma inequality:
\begin{equation}\label{eqM}
\forall \beta \leq 0, \mathbb{P}(M_i \leq \beta) \leq e^{-\frac{2\beta^2}{i}}
\end{equation}


Then using (\ref{eqS}) and (\ref{eqA}) we get $\tau >i \Rightarrow S_i - A_i \leq n - i\alpha $,  \textit{i.e.} $\tau >i \Rightarrow M_i \leq n - i\alpha $

Thus we have $\mathbb{P}(\tau >i) \leq \mathbb{P}(M_i \leq n - i\alpha)$ and for $i \geq \frac{n}{\alpha}$ we can apply (\ref{eqM}) to get :

$\mathbb{P}(\tau >i) \leq e^{-\frac{2(n-i\alpha)^2}{i}}$

For $i\geq \frac{\sqrt{2}}{\sqrt{2}-1}\frac{n}{\alpha}$ (it implies that $i \geq \frac{n}{\alpha}$) we have $\frac{1}{2}(i\alpha)^2 \leq (n-i\alpha)^2$, which give for such $i$ :

$\mathbb{P}(\tau > i) \leq e^{-i\alpha^2}$

For $i \geq -\alpha^{-2}\ln p $, we have $e^{-i\alpha^2} \leq p$

Mixing the two above inequalities, when $i\geq \max\left( -\alpha^{-2}\ln p,\frac{\sqrt{2}}{\sqrt{2}-1}\frac{n}{\alpha} \right)$, we get:

$\mathbb{P}(\tau > i) \leq p$

Which concludes the proof.
\end{proof}

\begin{theorem}
For any $p\in [0,1[$. From any  configuration $\gamma$, the algorithm is self-stabilizing for a configuration $\gamma'$ where $I_{\gamma'}$ is a maximal independent set of $V_{2} \cup  I_{\gamma'}$, and reach such a configuration in $1+\max\left( -\alpha^{-2}\ln p,\frac{\sqrt{2}}{\sqrt{2}-1}\frac{n}{\alpha} \right)$ rounds or less with probability $1-p$.
\end{theorem}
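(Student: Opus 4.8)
The plan is to obtain the theorem as an immediate consequence of Lemma~\ref{bz-lem-stab-1round} and Lemma~\ref{bz-lem-speed}, the only gap being that Lemma~\ref{bz-lem-speed} assumes a degree-stabilized starting configuration whereas here $\gamma$ is arbitrary. First I would run the execution for one round: by Lemma~\ref{bz-lem-stab-1round}, the configuration $\gamma_1$ reached at the end of that first round is degree-stabilized, and by the earlier observation that degree-stabilization is preserved by every transition, the whole remaining execution stays degree-stabilized. Then I would apply Lemma~\ref{bz-lem-speed} with the same parameter $p$ to $\gamma_1$: it gives that, continuing from $\gamma_1$, a configuration $\gamma'$ with $I_{\gamma'}$ a maximal independent set of $V_2 \cup I_{\gamma'}$ is reached within $\max\left(-\alpha^{-2}\ln p,\frac{\sqrt{2}}{\sqrt{2}-1}\frac{n}{\alpha}\right)$ rounds with probability at least $1-p$. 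Splicing the initial round together with those rounds, and noting that the round decomposition of an execution is obtained by cutting at successive round boundaries (so round counts add across the cut at $\gamma_1$), yields the announced bound $1+\max\left(-\alpha^{-2}\ln p,\frac{\sqrt{2}}{\sqrt{2}-1}\frac{n}{\alpha}\right)$.

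For the self-stabilization statement proper, I would again defer to Lemma~\ref{bz-lem-speed}: it already establishes correctness and convergence for degree-stabilized initial configurations. Correctness (closure of the legitimate set under transitions) does not use degree-stabilization at all — it follows from Lemma~\ref{bz-lem-I-inc}, since $I$ only grows along transitions, so once $I_\gamma$ is a maximal independent set of $V_2 \cup I_\gamma$ it stays one. Convergence with probability $1$ from an arbitrary $\gamma$ then follows: for every $p>0$ a legitimate configuration is reached within the finite round bound above with probability at least $1-p$, hence it is reached eventually with probability $1$.

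The one point requiring a sentence of care is that prepending a single round to the execution analysed in Lemma~\ref{bz-lem-speed} is legitimate against a worst-case daemon: Lemma~\ref{bz-lem-speed} is stated "from any degree-stabilized configuration", and as the daemon's choices during the first round vary, $\gamma_1$ ranges exactly over degree-stabilized configurations, so no matter what the adversary does in that first round the bound from Lemma~\ref{bz-lem-speed} applies to the resulting $\gamma_1$. Consequently the adversary gains nothing, and the $+1$ accounts precisely for the round spent reaching degree-stabilization. I do not expect any real obstacle here, as all the substantive work — the progress estimate of Lemma~\ref{bz-lem-advance} and the martingale/Azuma argument of Lemma~\ref{bz-lem-speed} — has already been carried out.
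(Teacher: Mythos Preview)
Your proposal is correct and takes essentially the same approach as the paper: invoke Lemma~\ref{bz-lem-stab-1round} to reach a degree-stabilized configuration in one round, then apply Lemma~\ref{bz-lem-speed} from there. The paper's proof is in fact just those two sentences; your additional remarks on correctness, convergence, and the daemon's choices in the first round are sound elaborations but not needed beyond what the cited lemmas already provide.
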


\begin{proof}
From Lemma~\ref{bz-lem-stab-1round} we reach a degree-stabilized configuration after at most one round.
Then from that configuration, we apply Lemma~\ref{bz-lem-speed}. 
\end{proof}

\section{In an Anonymous System under the Adversary Daemon}
\label{sec:anonymous}

\subsection{The algorithm}

The algorithm builds a maximal independent set represented by a local variable~$s$.

We could have used the previous algorithm as fairness was only needed to contain Byzantine influence. But the complexity would have been something proportional to $\Delta n^2$, and as we will prove, we can do better than that.

We keep the idea of having nodes making candidacy and then withdrawing if the candidate's situation is not correct. We still do not have identifiers, and so we do need a probabilistic tie-break. But contrary to the byzantine case, we move the probabilities to the \textbf{Withdrawal} rule: a non-candidate node with no candidate neighbour will always become a candidate when activated, but a candidate node with a candidate neighbour will only withdraw with probability $\frac{1}{2}$ when activated.
\begin{algo}
Any node $u$ has a single local variable $s_u \in \quickset{\bot,\top}$ and may make a move according to one of the following rules: \\
\textbf{(Candidacy)} $(s_u=\bot)\wedge(\forall v \in N(u), s_v = \bot) \rightarrow s_u := \top$\\
\textbf{(Withdrawal?)} $(s_u=\top)\wedge(\exists v \in N(u), s_v =\top) \rightarrow $ if $Rand(\frac{1}{2})=1$ then $s_u := \bot$\\
\end{algo}

The following lemma guarantee the correction of the algorithm.

\begin{lemma} \label{lem:adv-stab}
A configuration $\gamma$ is stable if and only if $\beta(\gamma)$ is a maximal independent set of $G$.
\end{lemma}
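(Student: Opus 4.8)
The plan is to characterize stable configurations directly and match them to the definition of a maximal independent set. Let me write $\beta(\gamma) = \quickset{u \in V \mid s_u^\gamma = \top}$ (the set of candidate nodes); this is the natural analogue of $I_\gamma$ for the Byzantine-free setting, where no containment is needed since there are no Byzantine nodes. The proof is an ``if and only if'' over two rules only, so it splits cleanly into showing that stability forces $\beta(\gamma)$ to be both independent and maximal, and conversely that a maximal independent set leaves no rule enabled.

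\medskip
\noindent\textbf{Plan of proof.} First I would prove the forward direction. Suppose $\gamma$ is stable, so no node is activable, meaning neither \textbf{Candidacy} nor \textbf{Withdrawal?} is enabled anywhere. To see that $\beta(\gamma)$ is independent: if two adjacent nodes $u,v$ both had $s$-value $\top$, then \textbf{Withdrawal?} would be enabled on $u$ (it has a $\top$-valued neighbour $v$), contradicting stability. Hence no edge has both endpoints in $\beta(\gamma)$, i.e.\ $\beta(\gamma)$ is independent. To see maximality (equivalently, that $\beta(\gamma)$ is a dominating set, since a maximal independent set is exactly an independent dominating set): take any node $u \notin \beta(\gamma)$, so $s_u^\gamma = \bot$. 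If $u$ had no neighbour in $\beta(\gamma)$, then every $v \in N(u)$ would have $s_v^\gamma = \bot$, so \textbf{Candidacy} would be enabled on $u$, again contradicting stability. Hence every node outside $\beta(\gamma)$ has a neighbour inside it; combined with independence, $\beta(\gamma)$ is a maximal independent set of $G$.

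\medskip
\noindent\textbf{Converse direction.} Now suppose $\beta(\gamma)$ is a maximal independent set of $G$; I must show no rule is enabled, i.e.\ $\gamma$ is stable. Take any node $u$. If $s_u^\gamma = \top$, then $u \in \beta(\gamma)$; by independence none of its neighbours is in $\beta(\gamma)$, so $\forall v \in N(u),\ s_v^\gamma = \bot$, which makes the guard of \textbf{Withdrawal?} false on $u$; and the guard of \textbf{Candidacy} is false since $s_u^\gamma \neq \bot$. If instead $s_u^\gamma = \bot$, then $u \notin \beta(\gamma)$; by maximality ($\beta(\gamma)$ dominating) $u$ has a neighbour $v \in \beta(\gamma)$, i.e.\ $s_v^\gamma = \top$, so the guard of \textbf{Candidacy} is false on $u$; and \textbf{Withdrawal?} is not enabled since $s_u^\gamma \neq \top$. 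In either case $u$ is not activable, so $\gamma$ is stable.

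\medskip
\noindent\textbf{Main obstacle.} There is essentially no deep obstacle here — the statement is an unwinding of definitions — but the one point requiring a little care is the equivalence between ``maximal independent set'' and ``independent dominating set'': maximality must be phrased as ``no vertex can be added while preserving independence'', and one should note that for an \emph{independent} set this is the same as being a dominating set. I would state this equivalence explicitly (or cite it as the standard fact recalled in the introduction that a maximal independent set is a minimal dominating set) so that the domination argument in both directions is justified cleanly. Once that is in place, both implications are immediate from inspecting the two guards.
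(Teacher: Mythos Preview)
Your argument is clean, but it rests on a definition of $\beta(\gamma)$ that differs from the paper's. You take $\beta(\gamma)=\{u\mid s_u^\gamma=\top\}$, while the paper defines $\beta(\gamma)=\{u\mid s_u^\gamma=\top \wedge \forall v\in N(u),\, s_v^\gamma=\bot\}$, i.e.\ only the \emph{locally alone} $\top$-nodes. In a stable configuration the two sets coincide (no \textbf{Withdrawal?} enabled means every $\top$-node is locally alone), so the lemma \emph{statement} you end up proving is equivalent; but your converse direction does not transfer verbatim. The step ``if $s_u^\gamma=\top$ then $u\in\beta(\gamma)$; by independence all neighbours have $s$-value $\bot$'' is false for the paper's $\beta$: a priori a $\top$-node with a $\top$-neighbour is simply not in $\beta(\gamma)$, and you cannot invoke independence of $\beta(\gamma)$ to conclude anything about its neighbours' $s$-values.

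The paper fixes this by arguing the converse via contraposition: if some rule is enabled on $u$, then in either case (\textbf{Withdrawal?} or \textbf{Candidacy}) one checks that $u\notin\beta(\gamma)$ and that no neighbour of $u$ lies in $\beta(\gamma)$ (because each neighbour sees a $\top$-node, namely $u$, or is itself $\bot$), so $\beta(\gamma)\cup\{u\}$ is a strictly larger independent set and $\beta(\gamma)$ is not maximal. Your forward direction is essentially the same as the paper's; to align the converse, replace the line ``$s_u=\top\Rightarrow u\in\beta(\gamma)$'' by this non-maximality argument, or first observe that with the paper's $\beta$ the set is always independent, so only maximality is at stake.
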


\begin{proof}
Suppose $\gamma$ is not stable.
\begin{itemize}
\item If \textbf{Withdrawal?} is activable on a node $u$, it means that $u \not\in \beta(\gamma)$ as $u$ has a neighbor with $s$-value $\top$. But as $u$ has $s$-value $\top$ it also means that none of its neighbors is in $\beta(\gamma)$, thus $\beta(\gamma) \cup \quickset{u}$ is an independant set greater than $\beta(\gamma)$.
\item Else, \textbf{Candidacy} is activable on a node $u$. It means that $u \not\in \beta(\gamma)$ as $s_u=\bot$. It also means that none of its neighbors is in $\beta(\gamma)$ since they have all $s$-value $\bot$, thus $\beta(\gamma) \cup \quickset{u}$ is an independant set greater than $\beta(\gamma)$.
\end{itemize}
Thus, by contraposition, if $\beta(\gamma)$ is a maximal independant set of $G$, then $\gamma$ is stable.

Suppose now that $\gamma$ is stable. As \textbf{Withdrawal?} cannot be enabled on any node, any node with $s$-value $\top$ is in $\beta(\gamma)$. But then as \textbf{Withdrawal?} cannot be enabled on any node, every node has a node with $s$-value $\top$, and thus a node of $\beta(\gamma)$ in its closed neighborhood. $\beta(\gamma)$ is then a maximal independant set of $G$. 
\end{proof}

Now that we have correction, we want to prove the convergence of our algorithm.
The idea that we will develop is that we can give a non-zero lower bound on the probability that a connected component of candidate nodes eventually collapses into at least one definitive member of the independent set. As every transition with \textbf{Candidacy} moves makes such scenes appear, and \textbf{Withdrawal?} moves make those collapse into at least one member of the independent set with non-zero probability, this should converge towards a maximal independent set.

Looking at the guards of the rules of the algorithm, we observe the following fact:
\begin{fact}\label{fact:basic}
In a configuration $\gamma$, when \textbf{Withdrawal?} is enabled on a node, \textbf{Candidacy} is not enabled on any of its neighbours.
\end{fact}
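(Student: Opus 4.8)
The plan is to argue directly from the guards of the two rules, since the statement is purely local: it involves only a node $u$ on which \textbf{Withdrawal?} is enabled and an arbitrary neighbour $v \in N(u)$, and it asks that \textbf{Candidacy} be not enabled on $v$ in the same configuration $\gamma$. Concretely, I would first unfold what it means for \textbf{Withdrawal?} to be enabled on $u$ in $\gamma$: by the guard $(s_u^\gamma = \top) \wedge (\exists w \in N(u),\, s_w^\gamma = \top)$, so in particular $s_u^\gamma = \top$. Then I would fix any $v \in N(u)$ and examine the guard of \textbf{Candidacy} on $v$, namely $(s_v^\gamma = \bot) \wedge (\forall w \in N(v),\, s_w^\gamma = \bot)$. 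Since the neighbourhood relation is symmetric, $u \in N(v)$, and we have just observed $s_u^\gamma = \top \neq \bot$; hence the conjunct $\forall w \in N(v),\, s_w^\gamma = \bot$ fails (witnessed by $w = u$), so the guard of \textbf{Candidacy} is false on $v$. Therefore \textbf{Candidacy} is not enabled on $v$, and since $v$ was an arbitrary neighbour of $u$, the fact holds.

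There is essentially no obstacle here — this is a one-line consequence of the fact that \textbf{Withdrawal?} requires $s_u = \top$ while \textbf{Candidacy} on a neighbour requires all of that neighbour's neighbours (which includes $u$) to have $s$-value $\bot$. The only thing to be careful about is to invoke the symmetry of $E$ explicitly (so that $u \in N(v)$ follows from $v \in N(u)$), and to note that the argument does not depend on which $\top$-neighbour of $u$ witnesses the second conjunct of the \textbf{Withdrawal?} guard; $u$ itself always serves as the witness blocking \textbf{Candidacy} on each of its neighbours. No probabilistic reasoning, fairness assumption, or Byzantine consideration is involved, since this is a statement about a single fixed configuration and about which rules are syntactically enabled in it.

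Consequently the proof is two or three sentences: assume \textbf{Withdrawal?} enabled on $u$, extract $s_u^\gamma = \top$, take any $v \in N(u)$, use $u \in N(v)$ to see that the universally-quantified conjunct in the \textbf{Candidacy} guard on $v$ is violated, and conclude.
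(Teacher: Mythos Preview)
Your proposal is correct and matches the paper's approach: the paper does not give an explicit proof but simply presents the fact as an immediate observation from the guards of the two rules, which is precisely what you unfold in detail. Your argument is the natural (and only) way to see it.
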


Given a configuration $\gamma$, we define $\beta(\gamma) = \quickset{u \in V \mid s_u {=} \top \wedge \forall v \in N(u), s_v {=} \bot}$. Note that $\beta(\gamma)$ is always an independent set since two distinct members cannot be neighbors (as they have both $s$-value $\top$).

\begin{lemma} \label{adv-lem-betagrows}
If $\gamma \rightarrow \gamma'$, then $\beta(\gamma) \subseteq \beta(\gamma')$.
\end{lemma}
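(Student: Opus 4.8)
The statement is the analogue, in the adversarial-daemon algorithm, of Lemma~\ref{bz-lem-I-inc}, and the same local argument works: I want to show that a node in $\beta(\gamma)$, together with its whole closed neighbourhood, is frozen in the transition $\gamma \to \gamma'$. So fix $u \in \beta(\gamma)$, which by definition means $s_u^{\gamma} = \top$ and $s_v^{\gamma} = \bot$ for every $v \in N(u)$.

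First I would check that $u$ itself is not activable in $\gamma$: rule \textbf{Candidacy} has guard $s_u = \bot$, which fails since $s_u^{\gamma}=\top$; rule \textbf{Withdrawal?} has guard $(s_u = \top) \wedge (\exists v \in N(u),\, s_v = \top)$, and the second conjunct fails because all neighbours of $u$ have $s$-value $\bot$ in $\gamma$. Hence no rule acts on $u$ in $\gamma \to \gamma'$, so $s_u^{\gamma'} = \top$.

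Next I would show that no neighbour $v \in N(u)$ is activable in $\gamma$ either: \textbf{Withdrawal?} is not enabled on $v$ since $s_v^{\gamma} = \bot$; and \textbf{Candidacy} is not enabled on $v$ since $v$ has the neighbour $u$ with $s_u^{\gamma} = \top$, which violates the guard $\forall w \in N(v),\, s_w = \bot$. Therefore $s_v^{\gamma'} = \bot$ for every $v \in N(u)$. Combining the two paragraphs, in $\gamma'$ we still have $s_u^{\gamma'} = \top$ and $s_v^{\gamma'} = \bot$ for all $v \in N(u)$, i.e. $u \in \beta(\gamma')$. Since $u$ was an arbitrary element of $\beta(\gamma)$, this gives $\beta(\gamma) \subseteq \beta(\gamma')$.

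There is no real obstacle here — the argument is a short case analysis over the two rules, and the only thing to be careful about is to invoke the guards of both rules for both $u$ and its neighbours (rather than, say, only ruling out \textbf{Withdrawal?} on $u$). One could remark that, unlike in the Byzantine section, there is no need to restrict to a subset like $V_1$, because every node is bound by the algorithm.
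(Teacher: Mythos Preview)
Your proof is correct and is essentially the same as the paper's: both show that for $u\in\beta(\gamma)$ no rule is enabled on any vertex of $N[u]$, hence the $s$-values on $N[u]$ are unchanged and $u\in\beta(\gamma')$. The only cosmetic difference is that the paper groups the case analysis by rule (Point~1 kills \textbf{Candidacy} on all of $N[u]$, Point~2 kills \textbf{Withdrawal?} on all of $N[u]$), whereas you group it by node ($u$ first, then each $v\in N(u)$).
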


\begin{proof}
For every $u\in \beta(\gamma)$, we have by definition:
\begin{enumerate}
\item $s_u^{\gamma}=\top$; 
\item for every neighbour $u$ of $v$, $s_u^{\gamma}=\bot$.
\end{enumerate}
Point~1 implies that \textbf{Candidacy} is not enabled on any vertex of $N[u]$, and Point~2 implies that \textbf{Withdrawal?} is not enbabled on any vertex of $N[u]$.

For every vertex $v\in N[u]$, since no rule is enabled on $v$, we have $s_v^{\gamma} = s_v^{\gamma'}$. Thus, $u \in \beta(\gamma')$.
\end{proof}

Now that we know that $\beta$ cannot shrink, it remains to prove that it does grow, and within a reasonable amount of time.
We are introducing the following concepts to this end:

\begin{definition}
A set of node $\cX \subseteq V$ is said to be a \motnouveau{candidate set} of a configuration $\gamma$ if $\forall u \in \cX, (s_u^{\gamma} = \top) \wedge \forall v \in N(u), s_v^{\gamma} = \top \Rightarrow v\in \cX$. We will say that it is a \motnouveau{connected} candidate set when $\cX$ is a connected component of $G$.
\end{definition}

\subsection{The proof}

First, we transform all executions into executions having suitable properties, using the two following lemmas for the simplicity of the proof.

\begin{lemma} \label{lem:transformation1}

Let $\gamma_1$ a configuration and $t$ a valid set of moves in $\gamma_1$. Let also $t_w \subset t$ (resp. $t_c$) the set of \textbf{Withdrawal?} moves (resp \textbf{Candidacy} moves) in $t$.

Then making two successive transition using the set of moves $t_w$ and then $t_c$ is equivalent to make a transition with the set of move $t$ (in the sense that the distribution of probability on the values of nodes variables is the same).

\textit{i.e.} we may replace $\gamma_{1} \xrightarrow{t} \gamma_2$ with $\gamma_{1} \xrightarrow{t_{c}} \gamma' \xrightarrow{t_{w}} \gamma_2$.
\end{lemma}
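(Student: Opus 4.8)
The plan is to show that the two ways of applying the moves produce the same probability distribution on $\gamma_2$ by arguing that the \textbf{Withdrawal?} moves in $t_w$ and the \textbf{Candidacy} moves in $t_c$ do not interfere with each other's guards, so that performing them sequentially (withdrawals first, then candidacies) leaves each node with exactly the same final $s$-value, with the same randomness, as performing them all at once. First I would fix notation: let $W = V(t_w)$ be the set of nodes performing a \textbf{Withdrawal?} move in $t$ and $C = V(t_c)$ the set performing a \textbf{Candidacy} move; since $t$ is valid, $W$ and $C$ are disjoint and every move in $t$ is one or the other (these are the only two rules). The key structural observation is Fact~\ref{fact:basic}: if \textbf{Withdrawal?} is enabled on a node $u$ in $\gamma_1$ then \textbf{Candidacy} is not enabled on any neighbour of $u$; hence no node of $C$ is adjacent to a node of $W$, i.e.\ $W$ and $C$ are at distance $\ge 2$ in $G$. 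This is the fact that makes everything decouple.

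Next I would verify the three claims that together give the equivalence. (1) All \textbf{Withdrawal?} moves in $t_w$ are still enabled in $\gamma_1$ and their execution in $\gamma_1 \xrightarrow{t_w} \gamma'$ uses the same coin flips as in $\gamma_1 \xrightarrow{t} \gamma_2$; this is immediate since the guard of \textbf{Withdrawal?} on $u$ depends only on $s_u$ and the $s$-values of $N(u)$, all of which are unchanged going from the "simultaneous" evaluation to the "$t_w$ first" evaluation (they are read from $\gamma_1$ in both cases). (2) Every \textbf{Candidacy} move in $t_c$ is still enabled in $\gamma'$: a node $v \in C$ has $s_v^{\gamma_1} = \bot$ and all its neighbours have $s$-value $\bot$ in $\gamma_1$; by the distance-$\ge 2$ property no neighbour of $v$ is in $W$, and the only nodes whose $s$-value changes in $\gamma_1 \xrightarrow{t_w} \gamma'$ are those in $W$ (which change from $\top$ to $\bot$, when their coin is $1$). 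So in $\gamma'$ the node $v$ still has $s_v = \bot$ and all neighbours still have $s$-value $\bot$, hence \textbf{Candidacy} is enabled, and being deterministic it sets $s_v := \top$ exactly as in the simultaneous transition. (3) Conversely, no node outside $W \cup C$ changes its $s$-value in either process, and for nodes of $W$ and $C$ the final values agree: a node of $W$ ends with $s = \bot$ iff its coin was $1$ (in both processes, and $\gamma_1 \xrightarrow{t_w}\gamma'$ doesn't touch it again since \textbf{Candidacy} needs all neighbours $\bot$ — but a withdrawing node's neighbour witnessing the $\top$ may itself... here I would note the node just withdrew so in $\gamma'$ it could conceivably become \textbf{Candidacy}-enabled, but it is not in $t_c$ so it is simply not activated), and a node of $C$ ends with $s = \top$.

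Putting (1)--(3) together: the map from coin-flip outcomes (one coin per node of $W$; nodes of $C$ are deterministic) to the resulting configuration is the same function whether we evaluate $\gamma_1 \xrightarrow{t} \gamma_2$ in one step or as $\gamma_1 \xrightarrow{t_c}\gamma' \xrightarrow{t_w}\gamma_2$ — wait, the statement orders it as $t_c$ then $t_w$ in the displayed "i.e."\ but as $t_w$ then $t_c$ in the text; I would use whichever order the paper fixes and check that the same decoupling argument (the two move-sets are at distance $\ge 2$ and read their guards from values that the other set does not modify) works symmetrically, which it does because a \textbf{Candidacy} node turning $\bot \to \top$ is not adjacent to any \textbf{Withdrawal?} node and so cannot disable or change the coin-dependent behaviour of a withdrawal. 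Hence the induced distribution on $\gamma_2$ is identical, which is exactly the claim.

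The main obstacle, and the place to be careful, is claim (2) together with checking that no \emph{new} enabled move is accidentally created that would have to be executed — i.e.\ making precise that "equivalent" means equality of the distribution over the specific configuration reached by executing \emph{exactly} the moves in $t_c$ (resp.\ $t_w$), not a maximal set of moves; once that is clear, the argument is just the distance-$\ge 2$ bookkeeping from Fact~\ref{fact:basic}. I would also explicitly note the edge case where $t_w$ or $t_c$ is empty (then one of the two transitions is vacuous and there is nothing to prove), and the fact that when a coin in $t_w$ comes up $0$ the withdrawing node keeps $s = \top$, which does not interfere with any \textbf{Candidacy} node in $C$ precisely because none is adjacent to it.
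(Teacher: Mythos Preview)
Your proposal is correct and follows essentially the same argument as the paper: the key step is precisely your invocation of Fact~\ref{fact:basic} to conclude that no node of $W$ is adjacent to a node of $C$, after which the decoupling is routine bookkeeping on each of $W$, $C$, and $V\setminus(W\cup C)$. The paper resolves the ordering ambiguity you flagged by taking the $t_c$-then-$t_w$ order (the one in the displayed ``i.e.''), but as you note the non-adjacency of $W$ and $C$ makes the argument symmetric.
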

\begin{proof}
Consider $\gamma_2$ the random variable describing the state of the configuration attained from $\gamma_1$ by executing the set of move $t$.

Let $W$ (resp. $C$) be the set of the nodes that appear in a \textbf{Withdrawal?} (resp. \textbf{Candidacy}) move of $t$. 

There is only one possible $\gamma'$ such that $\gamma_1 \xrightarrow{t_c} \gamma'$ since the rule \textbf{Candidacy} is deterministic, consider this configuration $\gamma'$.

Due to Fact \ref{fact:basic}, no node in $W$ has a node of $C$ as neighbor. Thus,  nodes in $W$ and their neighbours did not change their local variable in the transition $\gamma_{1} \xrightarrow{t_{c}} \gamma'$. Then, \textbf{Withdrawal?} is enabled on every node of $W$ in $\gamma'$ and $t_w$ is a valid set of move in $\gamma'$.

Consider then the random variable $\gamma_2'$ such that $\gamma' \xrightarrow{t_w} \gamma_2'$.

\begin{itemize}
\item For $u \in C$, the transitions are deterministic and we have $s_u^{\gamma_2'} = \top = s_u^{\gamma_2}$
\item For $u \in W$, note that as long as \textbf{Withdrawal?} is executed, its results always follow the same probability law. Thus, as nodes that executed \textbf{Withdrawal} are the same in $\gamma' \xrightarrow{t_w} \gamma_2'$ and $\gamma_1 \xrightarrow{t} \gamma_2$, the probability distribution of $s_u$ is the same in $\gamma_2$ and $\gamma_2'$.
\item For nodes that are in neither $C$, or $W$, the didn't execute any rules in the considered transitions and thus they have the same $s$-value in every configuration we considered.
\end{itemize}

Thus, $\gamma_2$ has the same probability distribution as $\gamma_2'$, and hence the result.
\end{proof}

\begin{lemma} \label{lem:transformation2}
Let $\gamma_1$ be a configuration, $t$ a valid set of moves in $\gamma_1$ containing only \textbf{Withdrawal?} moves, and $A$ a candidate set of $\gamma_1$. Let also $t_A \subset t$ the set of moves on node of $A$ in $t$.

Then making two successive transition using the set of moves $t_A$ and then $t \setminus t_A$ is equivalent to make a transition with the set of move $t$ (in the sense that the distribution of probability on the values of nodes variables is the same).

\textit{i.e.} we may replace $\gamma_{1} \xrightarrow{t} \gamma_2$ with $\gamma_{1} \xrightarrow{t_A} \gamma' \xrightarrow{t \setminus t_A} \gamma_2$.
\end{lemma}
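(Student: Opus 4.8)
The statement to prove is Lemma~\ref{lem:transformation2}: given a set $t$ of \textbf{Withdrawal?} moves valid in $\gamma_1$ and a candidate set $A$ of $\gamma_1$, we can split $t$ into $t_A$ (moves on nodes of $A$) and $t\setminus t_A$ and execute them in two successive transitions with the same resulting probability distribution.

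The plan is to argue that executing $t_A$ first does not disturb the enabledness of the moves in $t\setminus t_A$, and vice versa, so that the two orderings are genuinely interchangeable; then the probabilistic outcome is the same because each \textbf{Withdrawal?} move flips its coin independently with the same law regardless of when it is scheduled. First I would set up notation: let $W$ be the set of nodes appearing in $t$, $W_A = W\cap A$ the nodes touched by $t_A$, and $W' = W\setminus A$ the nodes touched by $t\setminus t_A$. The key structural observation is that $A$ being a candidate set means: if $u\in A$ and $v\in N(u)$ has $s_v^{\gamma_1}=\top$, then $v\in A$. Since every node in $W$ has $s$-value $\top$ in $\gamma_1$ (it is about to do \textbf{Withdrawal?}), this tells us that no node of $W'$ is a neighbour of a node of $W_A$; more importantly, I need to check that executing the \textbf{Withdrawal?} moves of $t_A$ keeps the moves of $t\setminus t_A$ enabled.

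The main verification is this: \textbf{Withdrawal?} is enabled on $w\in W'$ in $\gamma_1$ because $s_w^{\gamma_1}=\top$ and $w$ has some neighbour $z$ with $s_z^{\gamma_1}=\top$. After performing $t_A$, the only nodes whose $s$-value changed are in $W_A\subseteq A$, and they changed from $\top$ to $\bot$. So I must ensure that $w$ still has a $\top$-valued neighbour in $\gamma'$. If $z\notin A$, then $z$'s value is untouched by $t_A$ (since $t_A$ only moves nodes of $A$), so $z$ is still $\top$ and $w$ is still enabled. If $z\in A$: since $w\in N(z)$ and $s_w^{\gamma_1}=\top$, the candidate-set property forces $w\in A$ — contradicting $w\in W'\subseteq V\setminus A$. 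Hence every $\top$-neighbour of $w$ lies outside $A$, so $w$ is still enabled after $t_A$; and $s_w$ itself is unchanged by $t_A$. Thus $t\setminus t_A$ is a valid set of moves in $\gamma'$. (Symmetrically one could check $t_A$ stays valid after $t\setminus t_A$, but for the stated decomposition only the forward direction is needed.) Then I would run the same three-bullet case analysis as in the proof of Lemma~\ref{lem:transformation1}: for $u$ not touched by $t$, the $s$-value is the same everywhere; for $u\in W_A$ (resp. $u\in W'$), the move is a \textbf{Withdrawal?} executed exactly once in both schedulings, its coin flip has the same Bernoulli$(1/2)$ law and is independent of everything else, so the marginal distribution of $s_u$ in $\gamma_2$ agrees with that obtained via $\gamma_1\xrightarrow{t_A}\gamma'\xrightarrow{t\setminus t_A}\gamma_2$; and since all coin flips are mutually independent the joint distribution agrees.

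The only genuinely delicate point is the enabledness check above — confirming that splitting off $t_A$ cannot silently disable a move of $t\setminus t_A$ by removing its last $\top$-valued neighbour. This is exactly where the hypothesis that $A$ is a \emph{candidate set} (rather than an arbitrary subset) is used, and it is the crux of the lemma; everything else is the same bookkeeping as in Lemma~\ref{lem:transformation1}. I would make sure to state explicitly that a \textbf{Withdrawal?} move never changes the value of any node other than the one performing it, and that it can only turn a $\top$ into a $\bot$, never the reverse, since both facts are what make the neighbour-preservation argument go through.
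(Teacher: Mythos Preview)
Your proof is correct and follows essentially the same approach as the paper: both arguments hinge on the observation that the candidate-set property forces any node of $W'$ (which has $s$-value $\top$ but lies outside $A$) to have no neighbour in $A$, so that $t\setminus t_A$ remains valid after $t_A$, and then the distributional equivalence follows from the independence and fixed law of the \textbf{Withdrawal?} coin flips. The paper phrases the enabledness check slightly more directly---it observes that the entire closed neighbourhood $N[u]$ of any $u\in W'$ is untouched by $t_A$---whereas you verify the guard piecewise (the $\top$-witness $z$ and $s_w$ separately), but the content is the same.
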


\begin{proof}
The proof follow the same pattern as the one of Lemma~\ref{lem:transformation2} and rely in the same way on the fact that a rule -here \textbf{Withdrawal}- applied on a node $u$ have the same distribution of probability of $s$-value after the transition whatever the starting configuration of the transition.

The only thing left to prove is that $t \setminus t_A$ is a valid set of move in any $\gamma'$ such that $\gamma_{1} \xrightarrow{t_A} \gamma'$.

Let's denote by $B$ the set of the nodes that appear in moves of $t \setminus t_A$. 

Consider $u \in B$. As \textbf{Withdrawal?} is enabled on $u$ in $\gamma_1$ we know that $s_u^{\gamma_1} = \top$. Then, by definition of a candidate set, $u$ cannot be a neighbor of a node in $A$, and as a result no node of $N[u]$ is activated in the transition $\gamma_{1} \xrightarrow{t_A} \gamma'$. The closed neighborhood of $u$ having the same $s$-values in $\gamma_1$ and $\gamma'$, \textbf{Withdrawal?} must be enabled on $u$ in $\gamma'$.

It is then the case for every node of $B$, and thus $t \setminus t_A$ is a valid set of move in $\gamma'$.
\end{proof}

Using those lemmas, we will now only consider executions satisfying the three following properties:
\begin{enumerate}
\item Each transition is composed of only one move type.
\item For each transition $ \gamma_{i-1} \xrightarrow{t_i} \gamma_{i}$ containing $\textbf{Withdrawal?})$ moves,  
  only nodes of the same connected candidate set in  $ \gamma_{i-1}$ execute  a move.
\end{enumerate}
 
Using Lemmas~\ref{lem:transformation1}~and~\ref{lem:transformation2}, all executions can be transformed into execution satisfying the two last property without increasing the number of moves.

From now, all the executions we consider satisfy the two previous properties.
 
Let $\mathcal E$ be an execution  $\gamma_0 \xrightarrow{t_1} \gamma_1 \dotsm \gamma_{i-1} \xrightarrow{t_i} \gamma_{i}\dotsm$.
 
We say that a candidate set $\cX$ of $\gamma$ is \motnouveau{alive} in $\gamma$ if \textbf{Withdrawal} is enabled on at least a node of $A$. Note that for connected candidate set it is equivalent to having cardinality at least $2$.

In a transition $\gamma \xrightarrow{t} \gamma'$, we say that an alive candidate set $\cX$ of $\gamma$ \motnouveau{vanishes} if $\forall u\in \cX$, $s_u^{\gamma'} = \bot$ or $u\in \beta_{\gamma'}$.

\begin{lemma} \label{lem:xivanishes}
Suppose that $\cX$ is an alive connected candidate set of $\gamma$, and that $t$ is a valid set of move in $\gamma$ that involve nodes of $\cX$ (recall that we transformed our execution to make withdrawal transitions only act on a given connected candidate set).

Then, if $\gamma'$ is the random variable describing the state of the system after performing the set of move $t$ from $\gamma$, the probability that $\beta(\gamma') \setminus \beta(\gamma) \not= \emptyset$ knowing that $\cX$ vanishes is at least $\frac{2}{3}$.
\end{lemma}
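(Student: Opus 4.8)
The plan is to condition on the event that $\cX$ vanishes in the transition $\gamma \xrightarrow{t} \gamma'$ and show that, with probability at least $\tfrac23$, not all nodes of $\cX$ end up with $s$-value $\bot$ — i.e. at least one node of $\cX$ is in $\beta(\gamma')$, which by Lemma~\ref{adv-lem-betagrows} (since such a node was candidate-with-candidate-neighbour in $\gamma$, hence not in $\beta(\gamma)$) gives $\beta(\gamma')\setminus\beta(\gamma)\neq\emptyset$. Since by our transformation the only nodes moving in $t$ are nodes of the single connected candidate set $\cX$ (and $t$ consists only of \textbf{Withdrawal?} moves), the closed neighbourhood of $\cX$ outside $\cX$ keeps $s$-value $\bot$; thus after the transition the nodes of $\cX$ with $s$-value $\top$ are exactly those that did not successfully withdraw, and such a node lies in $\beta(\gamma')$ iff all its $\cX$-neighbours withdrew. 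So the event ``$\cX$ vanishes but $\beta$ does not grow'' is exactly the event ``every node of $\cX$ on which \textbf{Withdrawal?} was invoked flipped to $\bot$'' (equivalently, all invoked withdrawals succeeded), since if some node stays $\top$ with a $\top$-neighbour staying $\top$ then $\cX$ did not vanish.

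The key quantitative step is then a bound of the form: $\Pr[\cX \text{ vanishes}] \le \tfrac32\,\Pr[\cX \text{ vanishes and } \beta \text{ grows}]$, equivalently $\Pr[\beta\text{ grows}\mid \cX\text{ vanishes}]\ge \tfrac23$. Write $k\ge 1$ for the number of nodes of $\cX$ on which a \textbf{Withdrawal?} move is performed in $t$ (these all have a $\top$-neighbour in $\gamma$, and $k\ge 1$ because $\cX$ is alive and $t$ involves $\cX$; actually one should argue the daemon must act on all currently-enabled nodes of the connected set, or at least that the analysis only needs the nodes it does act on). Each such move independently flips to $\bot$ with probability $\tfrac12$. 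The event ``all $k$ succeed'' has probability $2^{-k}$. I claim this event implies ``$\cX$ vanishes'': if every invoked node goes to $\bot$, then the remaining $\top$-nodes of $\cX$ (those not invoked) had all their $\top$-neighbours among the invoked ones — wait, not necessarily — so here is where the connectedness and the structure of a connected candidate set must be used carefully.

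The cleanest route is to partition by what ``vanishes'' forces. Decompose the probability: $\Pr[\cX\text{ vanishes}] = \Pr[\cX\text{ vanishes},\ \beta\text{ grows}] + \Pr[\cX\text{ vanishes},\ \beta\text{ doesn't grow}]$. The second event means all nodes of $\cX$ go to $\bot$; in particular every invoked node withdrew successfully, which has probability exactly $2^{-k}$ over the coin flips, and moreover ``all invoked nodes withdraw'' already forces all of $\cX$ to $\bot$ only if the non-invoked $\top$-nodes of $\cX$ have no $\top$-neighbour left — but a non-invoked $\top$-node with a $\top$-neighbour would have had \textbf{Withdrawal?} enabled in $\gamma$, contradicting that the daemon, acting on this connected candidate set, is required to… Here I would instead argue directly: conditioned on $\cX$ vanishing, consider the last node (in some fixed order, or any node) of $\cX$ that keeps $s$-value $\top$ in the process of the transition having $\beta$-membership — more robustly, I would show the event ``$\cX$ vanishes and $\beta$ doesn't grow'' is a subset of ``all invoked withdrawals succeed'', whose probability is $2^{-k}\le \tfrac12$, while the event ``$\cX$ vanishes'' contains ``exactly one invoked withdrawal fails and it has no surviving $\top$-neighbour'' — and then compare. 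I expect the main obstacle to be precisely this combinatorial bookkeeping: pinning down, for a connected candidate set under a withdrawal transition, exactly which coin-flip outcomes cause it to vanish versus to vanish-into-$\beta$, and proving the ratio is at most $\tfrac32$. The intuition is that conditioning on the rare event that the whole set disappears, it is at least twice as likely (roughly) that it disappears by leaving a survivor than by the single more-constrained all-$\bot$ pattern; turning ``roughly'' into the clean constant $\tfrac23$ is the real content.
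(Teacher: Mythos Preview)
Your proposal correctly identifies the target — showing that, conditioned on $\cX$ vanishing, the ``all of $\cX$ goes to $\bot$'' outcome has conditional probability at most $\tfrac13$ — but it never lands the argument, and the specific place where you get tangled (``wait, not necessarily'') is exactly where the paper makes a clean case split that you are missing.

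The split is on whether \emph{every} node of $\cX$ is activated in $t$. If some $u\in\cX$ is \emph{not} activated, then $s_u^{\gamma'}=s_u^{\gamma}=\top$; since we are conditioning on $\cX$ vanishing, by definition $u\in\beta(\gamma')$, and $\beta$ grows with conditional probability $1$. So the only case requiring any counting is when \emph{all} of $\cX$ is activated. In that case your parameter $k$ equals $|\cX|$, and because $\cX$ is an \emph{alive connected} candidate set we have $k\ge 2$ — this is where the $\tfrac23$ comes from, and your choice of $k$ as merely the number of invoked nodes (possibly $1$) is what blocks you.

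With $k=|\cX|\ge 2$ and all nodes flipping independent fair coins, each of the $2^k$ outcomes has the same probability $2^{-k}$. The single all-$\bot$ outcome is the unique one realising ``vanishes and $\beta$ doesn't grow''. On the other side, each of the $k$ outcomes with exactly one $\top$-survivor also realises ``$\cX$ vanishes'' (the lone survivor has all-$\bot$ neighbours both inside $\cX$ and outside, hence is in $\beta(\gamma')$). So $\Pr[\cX\text{ vanishes}]\ge (k+1)2^{-k}$, while $\Pr[\cX\text{ vanishes and }\beta\text{ doesn't grow}]=2^{-k}$, giving
\[
\Pr[\beta\text{ grows}\mid\cX\text{ vanishes}] \;\ge\; 1-\frac{2^{-k}}{(k+1)2^{-k}} \;=\; \frac{k}{k+1}\;\ge\;\frac{2}{3}.
\]
Your sketch gestures at ``exactly one invoked withdrawal fails and it has no surviving $\top$-neighbour'', but once you are in the all-activated case that caveat disappears: a single survivor automatically has no surviving $\top$-neighbour inside $\cX$. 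That, together with the initial case split, is the whole argument.
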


\begin{proof}If there is one node $u$ in $\cX$ that is not activated in the transition, we know that $s_u^{\gamma'} = s_u^{\gamma} = \top$. As $\cX$ is supposed to vanish in the transition, it implies that $u \in \beta(\gamma')$.

Else, every node of $\cX$ have performed \textbf{Withdrawal} in the transition.

Consider the distribution of the $s$-values after performing the set of move $t$ in configuration $\gamma$ (without the condition on the candidate set vanishing). Let's write $k=|\cX|$ As every node of $\cX$ flip a coin independently, every outcome has the same probabiliy $\left(\frac{1}{2}\right)^{k}$. In particular,
\begin{itemize}
\item The event where every node of $\cX$ change their $s$-value to $\bot$ in the transition.
\item Each of the $k$ events where every node of $\cX$ change their $s$-value to $\bot$ in the transition except exactly one.
\end{itemize}

Those events are compatible with the constraint of the candidate set vanishing. Every other event is either incompatible with the constraint, or a situation where at least two nodes of $\cX$ is in $\beta(\gamma')$, as at least two nodes have $s$-value $\top$ in $\gamma'$ and $\cX$ is supposed to vanish in the transition.

Thus, if denote by $\lambda$ the sum of the probability of the events that are compatible with the constraint of the candidate set vanishing, we can say that $\lambda \geq (k+1)\left(\frac{1}{2}\right)^k$, and thus the probability to have $\beta(\gamma_{j^{*}}) \setminus \beta(\gamma_{j^{*}-1}) \not= \emptyset$ with the condition is $1-\frac{\left(\frac{1}{2}\right)^k}{\lambda} \geq 1-\frac{\left(\frac{1}{2}\right)^k}{(k+1)\left(\frac{1}{2}\right)^k}=1-\frac{1}{k+1} = \frac{k}{k+1}$.

Moreover, $\cX$ is not empty by hypothesis. Consider $u \in \cX$, it has at least a neighbor $v$ such that $s_v^{\gamma} = \top$ as otherwise we would have $u \in \beta(\gamma)$. But then we have $v \in \cX$, as $\cX$ is a candidate set. Thus $k \geq 2$.

Thus the wanted probability is at least $\frac{2}{3}$.
\end{proof}

Now that we know that every time that an alive connected candidate set vanishes $\beta$ progresses with good probability, it reamains to guarantee that it ever happens. To do so, we will need the following lemmas about life and death of candidate set.

\begin{lemma} \label{lem:newalive}
If $\gamma \xrightarrow{t} \gamma'$ is a transition with $t$ containing only \textbf{Candidacy} moves, either $\beta(\gamma) \subsetneq \beta(\gamma')$ or $\exists \cX$ an alive connected candidate set of $\gamma'$ such that $\forall u \in \cX, s_u^{\gamma} = \bot$.
\end{lemma}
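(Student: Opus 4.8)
The plan is to analyze the transition $\gamma \xrightarrow{t} \gamma'$ where $t$ consists only of \textbf{Candidacy} moves, and to show that if $\beta$ does not strictly grow, then a fresh alive connected candidate set appears whose nodes were all $\bot$ in $\gamma$. First I would observe that since \textbf{Candidacy} is deterministic, $\gamma'$ is uniquely determined by $\gamma$ and $t$: every node that moved had $s$-value $\bot$ in $\gamma$ and has $s$-value $\top$ in $\gamma'$, and no other $s$-value changes. Let $C = V(t)$ be the (nonempty) set of nodes that performed \textbf{Candidacy}. By the guard of \textbf{Candidacy}, each $u \in C$ had $s_u^{\gamma} = \bot$ and all neighbours with $s$-value $\bot$ in $\gamma$; but since moves on adjacent nodes cannot both be in a valid set only when... actually I must be careful: two neighbours can both be in $C$ since the guard only checks $s$-values in $\gamma$, and both are $\bot$. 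So after the transition a whole connected cluster of former-$\bot$ nodes may have turned $\top$ simultaneously.

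The key case split: pick any $u \in C$. In $\gamma'$, $u$ has $s$-value $\top$. Either $u$ has a neighbour $v$ with $s_v^{\gamma'} = \top$, or it does not. If no node of $C$ has a $\top$-valued neighbour in $\gamma'$, I claim $\beta(\gamma) \subsetneq \beta(\gamma')$: indeed each such $u \in C$ then satisfies the definition of $\beta(\gamma')$ (it is $\top$ with all neighbours $\bot$), and $u \notin \beta(\gamma)$ since $s_u^{\gamma} = \bot$; combined with $\beta(\gamma) \subseteq \beta(\gamma')$ from Lemma~\ref{adv-lem-betagrows}, this gives strict growth. Otherwise, some $u \in C$ has a neighbour $v$ with $s_v^{\gamma'} = \top$; then \textbf{Withdrawal?} is enabled on $u$ in $\gamma'$. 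Let $\cX$ be the connected component of $G$ restricted to the $\top$-valued nodes of $\gamma'$ that contains $u$ — more precisely, I would define $\cX$ as the connected candidate set of $\gamma'$ containing $u$ (the connected component of $u$ in the subgraph induced by $\top$-valued vertices, which is a connected candidate set by definition). Since $u$ and $v$ are both in $\cX$ and $uv \in E$, $\cX$ has cardinality at least $2$, hence is alive. It remains to check $s_w^{\gamma} = \bot$ for every $w \in \cX$: this is the point that needs the no-strict-growth hypothesis.

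For that last step, suppose toward a contradiction that some $w \in \cX$ has $s_w^{\gamma} = \top$. Since $\cX$ is connected in the $\top$-subgraph of $\gamma'$ and $u \in \cX$ with $s_u^{\gamma} = \bot$ (as $u \in C$), there is an edge $w'w''$ inside $\cX$ with $s_{w'}^{\gamma} = \top$ and $s_{w''}^{\gamma} = \bot$. But $w'' \in \cX$ has $s$-value $\top$ in $\gamma'$ while it had $\bot$ in $\gamma$, so $w'' \in C$, meaning \textbf{Candidacy} was enabled on $w''$ in $\gamma$ — which requires all neighbours of $w''$ to have $s$-value $\bot$ in $\gamma$, contradicting $s_{w'}^{\gamma} = \top$. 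Hence every node of $\cX$ had $s$-value $\bot$ in $\gamma$, as required. Actually, I realize this argument shows more directly that \emph{every} $\top$-valued node of $\gamma'$ that lies in the same $\top$-component as a member of $C$ must itself be in $C$ (otherwise it had value $\top$ in $\gamma$ and would forbid a neighbour's candidacy), so $\cX \subseteq C$ and thus all of $\cX$ was $\bot$ in $\gamma$; I would present it in whichever form reads most cleanly. The main obstacle is simply keeping the bookkeeping of "which $\top$-component" straight and making sure the connectivity propagation argument is airtight; there is no hard inequality or probabilistic estimate here, just a careful structural case analysis.
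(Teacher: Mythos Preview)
Your proof is correct and follows essentially the same approach as the paper: both case-split on whether an activated node acquires a $\top$-neighbour, and both exploit the \textbf{Candidacy} guard to conclude that the resulting component consists entirely of nodes that were $\bot$ in $\gamma$. The only cosmetic difference is that the paper defines $\cX$ directly as the connected component of $u$ in the subgraph induced by $V(t)$ (so $\cX\subseteq C$ is immediate), whereas you define $\cX$ as the $\top$-component in $\gamma'$ and then argue $\cX\subseteq C$ via the guard --- as you yourself note at the end, these coincide.
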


\begin{proof}
As a transition must contains at least one move, there is at least one node $u$ appearing in $t$.

If no neighbor of $u$ appear in $t$, we know that no neighbor of $u$ change its $s$-value in the transition. But since \textbf{Candidacy} is enabled on $u$ in $\gamma$ we know that their $s$-value in $\gamma$ is $\bot$. Thus, $s_u^{\gamma'}=\top$ (since $u$ executed \textbf{Candidacy} in the transition), and every of its neighbor has $s$-value $\bot$ in $\gamma'$, wich is the definition for $u \in \beta(\gamma')$. And since \textbf{Candidacy} is enabled on $u$ in $\gamma$, $s_u^{\gamma}=\bot$, thus $u \not\in \beta(\gamma')$. Thus $\beta(\gamma) \subsetneq \beta(\gamma')$ (recall that from Lemma~\ref{adv-lem-betagrows} $\beta$ cannot lose nodes in a transition).

Suppose now that at least a neighbor of $u$ appear in $t$. Then consider the set $\cX$ of the nodes that are accessible from $u$ in the subgraph induced by the nodes appearing in $t$. Every node $v \in \cX$ executes \textbf{Candidacy} in the transition, thus $s_v^{\gamma}=\bot$ and $s_v^{\gamma'}=\top$. Moreover, for every neighbor of a node of $\cX$ not in $\cX$, the condition of \textbf{Candidacy} imply that they have $s$-value $\bot$ in $\gamma$, and thus in $\gamma'$ too since they did not perform a rule in the transition. Thus $\cX$ is a candidate set of $\gamma'$, connected by construction, and of cardinality at least $2$ by hypothesis thus alive.
\end{proof}

\begin{lemma} \label{lem:stayingalive}
If $\gamma \xrightarrow{t} \gamma'$ is a transition and $\cX$ is an alive connected candidate set of $\gamma$, one of those is true:
\begin{itemize}
\item $\cX$ vanishes in the transition.
\item $\exists \cX'$ an alive connected candidate set of $\gamma'$ such that $\cX' \subseteq \cX$
\end{itemize}
\end{lemma}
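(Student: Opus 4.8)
The plan is to analyze a transition $\gamma \xrightarrow{t} \gamma'$ with $\cX$ an alive connected candidate set of $\gamma$, and show that either $\cX$ vanishes or some alive connected candidate set $\cX' \subseteq \cX$ survives in $\gamma'$. Recall that by our execution transformation, a transition either contains only \textbf{Candidacy} moves or only \textbf{Withdrawal?} moves. If $t$ contains only \textbf{Candidacy} moves, then by Fact~\ref{fact:basic} (or directly from the guard of \textbf{Candidacy}, which requires all neighbours to have $s$-value $\bot$), no node of $\cX$ and no neighbour of a node of $\cX$ is activated in the transition, since every node of $\cX$ has $s$-value $\top$. Hence every node of $\cX$ keeps $s$-value $\top$ and the closed neighbourhood of $\cX$ is unchanged, so $\cX$ is still an alive connected candidate set of $\gamma'$; take $\cX' = \cX$.

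The remaining case is $t$ containing only \textbf{Withdrawal?} moves. By the second structural property of our transformed executions, all nodes activated in such a transition belong to a single connected candidate set of $\gamma$. If that set is not $\cX$ (equivalently, no node of $\cX$ is activated), then the same argument as above applies: the closed neighbourhood of $\cX$ is untouched — here using that $\cX$ is a connected component of $G$, so no node outside $\cX$ is adjacent to $\cX$ — and $\cX' = \cX$ works. So assume the activated nodes all lie in $\cX$. Now consider the set $\cX'$ of nodes of $\cX$ that have $s$-value $\top$ in $\gamma'$ (these are exactly the nodes of $\cX$ that either were not activated, or executed \textbf{Withdrawal?} but the coin flip came up so they stayed $\top$). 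If $\cX' = \emptyset$, then every node of $\cX$ has $s$-value $\bot$ in $\gamma'$, so $\cX$ vanishes and we are done. If $\cX'$ contains a node $u$ that still has a neighbour $v$ with $s_v^{\gamma'} = \top$, then $v \in \cX$ (as $\cX$ is a connected component and $v$ had $s$-value $\top$ in $\gamma$, being adjacent to $u \in \cX$), so $v \in \cX'$; the connected component of $u$ in the subgraph induced by $\cX'$ is then an alive connected candidate set of $\gamma'$ contained in $\cX$. The only remaining subcase is that $\cX'$ is nonempty but every node of $\cX'$ has all its $\gamma'$-neighbours with $s$-value $\bot$: then every node of $\cX'$ lies in $\beta(\gamma')$, so again $\cX$ vanishes.

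A few points need care. First, in the subcase analysis one must check that $\cX'$ (when nonempty and not all in $\beta(\gamma')$) really is a candidate set of $\gamma'$: a node of $\cX'$ has $s$-value $\top$ by construction, and any neighbour with $s$-value $\top$ in $\gamma'$ is in $\cX$ (since $\cX$ is a connected component and such a neighbour had $s$-value $\top$ in $\gamma$, as only \textbf{Withdrawal?} moves occur and they cannot raise an $s$-value) hence in $\cX'$; connectedness of the chosen component is by construction, and aliveness follows since it has cardinality at least $2$ (the node $u$ together with the witnessing neighbour $v$). Second, one should note that a node of $\cX$ not activated in the transition keeps $s$-value $\top$ — this is immediate since only activated nodes change variables. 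Third, the dichotomy "vanishes or survives" partitions cleanly: $\cX$ vanishes exactly when every node of $\cX$ ends up in $\{s = \bot\} \cup \beta(\gamma')$, and the only way to avoid this is to have some $u \in \cX$ with $s_u^{\gamma'} = \top$ and $u \notin \beta(\gamma')$, which is precisely the situation producing a surviving $\cX'$. I do not expect a serious obstacle here; the main thing to get right is the bookkeeping that whenever a node of $\cX$ retains $s$-value $\top$ without being in $\beta(\gamma')$, all the "$\top$" nodes reachable from it inside $\cX$ form the desired alive connected candidate set, and that this covers every case not already handled by "$\cX$ vanishes".
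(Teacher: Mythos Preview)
Your proof is correct and follows essentially the same approach as the paper: split on whether $t$ consists of \textbf{Candidacy} moves, \textbf{Withdrawal?} moves on another connected candidate set, or \textbf{Withdrawal?} moves on $\cX$, and in the last case take the connected component (in the $\top$-induced subgraph of $\gamma'$) of a node $u\in\cX$ that remains $\top$ but is not in $\beta(\gamma')$. Your three-way subcase split ($\cX'=\emptyset$; some $u\in\cX'$ has a $\top$-neighbour; every node of $\cX'$ lies in $\beta(\gamma')$) is just an explicit unpacking of the paper's ``suppose $\cX$ does not vanish'' step; one minor remark is that in the ``other connected candidate set'' case you do not actually need that $\cX$ is a connected component of $G$---the candidate-set closure property (any $\top$-neighbour of a node of $\cX$ lies in $\cX$) already forces the activated nodes to be non-adjacent to $\cX$.
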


\begin{proof}
If $t$ is a set of \emph{Candidacy} moves, or a set of \textbf{Withdrawal?} moves on a connected candidate set different from $\cX$, then the transition does not affect neither nodes of $\cX$ or their neighbor thus $\cX \subseteq \cX$ is still an alive connected candidate set of $\gamma'$.

Else, $t$ is a set of \textbf{Withdrawal?} moves on nodes of $\cX$. Suppose $\cX$ does not vanish in the transition. By definition of vanishing, it means that it exists a node $u \in \cX$ such that $s_u^{\gamma'} = \top$ and $u\not\in \beta(\gamma')$. It implies that there exists $v \in N(u)$ such that $s_v^{\gamma'} = \top$. But then, as $t$ contains only \textbf{Withdrawal?} moves, $s_v^{\gamma}=\top$. By definition of a candidate set, we must have $v \in \cX$. Now consider $\cX'$ the set of nodes in the connected component of $u$ in the subgraph of $G$ induced by the nodes with $s$-value $\top$ in $\gamma'$. As only \textbf{Withdrawal?} moves have been executed in the transition, this imply that $\cX'$ is a subset of the set of nodes in the connected of the connected component of $u$ in the subgraph of $G$ induced by the nodes with $s$-value $\top$ in $\gamma$, which is exactly $\cX$. Moreover, if $w\not\in \cX'$ is a neighbor of a node in $\cX'$, by definition of $\cX'$ it is such that $s_w^{\gamma'}= \bot$. Hence, $cX'$ is a connected candidate set of $\gamma'$, alive since of cardinality at least $2$.\end{proof}

\begin{lemma} \label{lem:maxalive}
There cannot be more than $\frac{n}{2} - |\beta(\gamma)|$ different connected candidate set alive in a configuration $\gamma$.
\end{lemma}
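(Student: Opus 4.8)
The plan is to bound the number of alive connected candidate sets by counting vertices, using the fact that an alive connected candidate set has cardinality at least $2$ together with the fact that distinct connected candidate sets are vertex-disjoint. First I would recall that a connected candidate set $\cX$ of $\gamma$ is a connected component of the subgraph of $G$ induced by the vertices with $s$-value $\top$ in $\gamma$: indeed every vertex of $\cX$ has $s$-value $\top$, and the defining implication $\forall v\in N(u),\ s_v^\gamma=\top \Rightarrow v\in\cX$ says exactly that $\cX$ is closed under taking $\top$-valued neighbours, so being connected (in $G$, hence in the induced subgraph on $\top$-valued vertices) forces $\cX$ to be a whole such component. Consequently, two distinct connected candidate sets are disjoint, being distinct connected components of the same graph.

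Next I would observe that each alive connected candidate set contains at least $2$ vertices — this is noted right after the definition of "alive" and also reproven inside Lemma~\ref{lem:xivanishes} (if $u\in\cX$ had no $\top$-valued neighbour it would lie in $\beta(\gamma)$, contradicting that \textbf{Withdrawal} is enabled somewhere on $\cX$; a $\top$-valued neighbour $v$ then also lies in $\cX$). Since these sets are pairwise disjoint and each avoids $\beta(\gamma)$ (a vertex of $\beta(\gamma)$ has all neighbours $\bot$, so a candidate set containing it would be the singleton $\{u\}$, which is not alive), the alive connected candidate sets together with $\beta(\gamma)$ form a collection of pairwise disjoint subsets of $V$. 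If there are $k$ alive connected candidate sets, summing cardinalities gives $2k + |\beta(\gamma)| \le n$, hence $k \le \tfrac{n}{2} - \tfrac{|\beta(\gamma)|}{2} \le \tfrac{n}{2} - |\beta(\gamma)|$ when $|\beta(\gamma)|\ge 0$; more carefully, $k \le \frac{n - |\beta(\gamma)|}{2}$, and since every element of $\beta(\gamma)$ also occupies a vertex, the stated bound $\frac{n}{2} - |\beta(\gamma)|$ follows once one accounts for the $\beta(\gamma)$ vertices being separate from all of the $\ge 2k$ vertices used by the alive sets: $|\beta(\gamma)| + 2k \le n$ already gives $k \le \frac n2 - \frac{|\beta(\gamma)|}{2}$, and a slightly finer disjointness bookkeeping (the $\beta(\gamma)$ vertices are counted once, the alive-set vertices at least twice each) yields the claim.

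The only real subtlety — and the step I would be most careful about — is matching the exact constant $\frac{n}{2} - |\beta(\gamma)|$ rather than the weaker $\frac{n}{2} - \frac{|\beta(\gamma)|}{2}$ that a naive disjointness count gives. The resolution is that one should not merely use disjointness of the alive sets from $\beta(\gamma)$, but the stronger fact that every vertex of an alive connected candidate set has a $\top$-valued neighbour inside that same set, so one can charge each alive set two vertices \emph{none of which is in $\beta(\gamma)$}, while $\beta(\gamma)$ accounts for $|\beta(\gamma)|$ further vertices; in fact one can do better by noting that the union of all $\top$-valued vertices splits into the $|\beta(\gamma)|$ isolated ones and the components of size $\ge 2$, the latter containing at least $2k$ vertices, giving $2k \le n - |\beta(\gamma)|$. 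I would present this last inequality cleanly and then conclude $k \le \frac{n-|\beta(\gamma)|}{2} \le \frac n2 - |\beta(\gamma)|$ is \emph{not} valid in general — so I would instead state and prove the sharp bound $k \le \frac{n - |\beta(\gamma)|}{2}$ and reconcile it with the lemma statement (either the paper intends $|\beta(\gamma)|$ small, or the intended reading already has the $\beta$ vertices among the $n$; I would flag this and use whichever disjoint-set partition makes the stated constant come out, most likely counting the at-least-two vertices per alive set together with $\beta(\gamma)$ inside $V$ and using that $k\ge 1$ forces room).
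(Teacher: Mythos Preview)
Your core argument is exactly the paper's: alive connected candidate sets are pairwise disjoint (they are connected components of the subgraph induced by $\top$-valued vertices) and each has size at least $2$. The paper's proof says nothing more than these two facts and does not even mention $\beta(\gamma)$; it never carries out the arithmetic that would produce the stated constant.

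You are right to be uneasy about the constant. From disjointness, size $\ge 2$, and the observation that the alive sets are disjoint from $\beta(\gamma)$, one gets $2k + |\beta(\gamma)| \le n$, hence $k \le \tfrac{n - |\beta(\gamma)|}{2}$. This is the sharp bound; the stated $\tfrac{n}{2} - |\beta(\gamma)|$ is strictly smaller whenever $|\beta(\gamma)| \ge 1$ and is in fact false in general (take $G$ to be two isolated $\top$-vertices together with two disjoint $\top$-edges: $n=6$, $|\beta(\gamma)|=2$, $k=2$, but $\tfrac{n}{2}-|\beta(\gamma)|=1$). So there is no ``finer bookkeeping'' that rescues the constant as written --- drop that attempt from your write-up. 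The lemma as stated appears to contain a typo, and the paper's own two-sentence proof does not establish it either. Since Lemma~\ref{lem:maxalive} is only used qualitatively in Lemma~\ref{lem:execends} (any finite upper bound suffices), this does not affect the rest of the argument; just prove the correct bound $k \le \tfrac{n - |\beta(\gamma)|}{2}$ and note the discrepancy.
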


\begin{proof}
An alive candidate set must have at least two nodes. Two candidate set that share a node must be the same candidate set as the definition of a candidate set prevents to have $s$-value $\top$ in the neighborhood of a candidate set.
\end{proof}

\begin{lemma}\label{lem:execends}
The execution ends with probability $1$.
\end{lemma}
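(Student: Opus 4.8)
The plan is to prove the contrapositive, namely that the event $\mathcal{B}$ that the execution never ends has probability $0$. I would work with executions already put in the normalized form of Lemmas~\ref{lem:transformation1} and~\ref{lem:transformation2} (each transition has a single move type, and each \textbf{Withdrawal?} transition acts on a single connected candidate set), and first record three purely deterministic facts. (i) By Lemma~\ref{adv-lem-betagrows} the sets $\beta(\gamma_k)$ are nested, and since $|\beta|\le n$ this sequence is eventually constant along any infinite execution, so $|\beta|$ strictly increases at most $n$ times in total. (ii) Writing $\tau(\gamma)$ for the number of $\top$-valued nodes, a \textbf{Candidacy} transition strictly increases $\tau$ while a \textbf{Withdrawal?} transition never increases it; as $\tau\le n$, no execution has $n{+}1$ consecutive \textbf{Candidacy} transitions, so on $\mathcal{B}$ there are infinitely many \textbf{Withdrawal?} transitions. (iii) A \textbf{Candidacy} transition leaves untouched every alive connected candidate set of its source configuration (a node doing \textbf{Candidacy} has an all-$\bot$ neighbourhood, hence so is the closed neighbourhood of such a set), so in particular it cannot make one vanish; and no transition can merge two such sets.

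The first main step would be to show that, almost surely, only finitely many transitions make a connected candidate set vanish. Let $\mathcal{F}_k$ be the $\sigma$-algebra generated by the coin flips of the first $k$ transitions, so the daemon's move set $t_k$ is $\mathcal{F}_{k-1}$-measurable; let $V_k$ be the event that transition $k$ makes some alive connected candidate set vanish, and $G_k\subseteq V_k$ the event that moreover $|\beta|$ strictly increases at transition $k$. By fact (iii), $V_k$ is possible only at a \textbf{Withdrawal?} transition on an alive connected candidate set, and there Lemma~\ref{lem:xivanishes} --- applied conditionally on $\mathcal{F}_{k-1}$, which fixes $\gamma_{k-1}$ and $t_k$ --- yields $\mathbb{P}(G_k\mid\mathcal{F}_{k-1})\ge\tfrac23\,\mathbb{P}(V_k\mid\mathcal{F}_{k-1})$. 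Since $\sum_k\mathds{1}_{G_k}\le n$ deterministically by (i), the conditional Borel--Cantelli lemma (L\'evy's extension) gives $\sum_k\mathbb{P}(G_k\mid\mathcal{F}_{k-1})<\infty$ a.s., hence $\sum_k\mathbb{P}(V_k\mid\mathcal{F}_{k-1})<\infty$ a.s., hence $\sum_k\mathds{1}_{V_k}<\infty$ a.s.

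The second step uses this to contradict $\mathbb{P}(\mathcal{B})>0$. In parallel, since each activated node of a \textbf{Withdrawal?} transition withdraws independently with probability $\tfrac12$, the event $W_k$ that transition $k$ is a \textbf{Withdrawal?} transition in which some node actually withdraws has $\mathbb{P}(W_k\mid\mathcal{F}_{k-1})\ge\tfrac12$ whenever $t_k$ is a \textbf{Withdrawal?} transition; with fact (ii) and the conditional Borel--Cantelli lemma this gives that, a.s.\ on $\mathcal{B}$, infinitely many transitions see a node withdraw. Now work on the (probability-one) event where both this conclusion and the one of the previous step hold, and assume $\mathcal{B}$. Fix a finite time $T$ after which $|\beta|$ is constant and no candidate set ever vanishes. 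After $T$, by Lemma~\ref{lem:stayingalive} and fact (iii) an alive connected candidate set can leave the family of alive ones only by vanishing (excluded) or by shrinking to one node (excluded: that node would be a lone $\top$-node, making $|\beta|$ grow), and that family can never lose members by merging; hence it only gains members, and being bounded by Lemma~\ref{lem:maxalive} it is constant from some time $T'\ge T$ on. After $T'$ no \textbf{Candidacy} transition can occur, since by Lemma~\ref{lem:newalive} it would --- without growing $|\beta|$ --- create a fresh alive connected candidate set and enlarge that family; so every transition past $T'$ is a \textbf{Withdrawal?} transition, whence $\tau$ is non-increasing past $T'$. But infinitely many transitions past $T'$ see a node withdraw, so $\tau$ strictly decreases infinitely often past $T'$ --- impossible for an integer sequence in $\{0,\dots,n\}$. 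Therefore $\mathbb{P}(\mathcal{B})=0$, i.e.\ the execution ends with probability $1$.

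I expect the main obstacle to be exactly this probabilistic bookkeeping rather than any new combinatorics, which is entirely supplied by Lemmas~\ref{lem:xivanishes}--\ref{lem:maxalive}. The tempting shortcut ``$|\beta|$ eventually stabilizes, hence no candidate set ever vanishes afterwards'' is not valid, because Lemma~\ref{lem:xivanishes} controls the probability of $|\beta|$ growing only \emph{conditionally} on a vanishing; converting ``$|\beta|$ grows at most $n$ times'' into ``almost surely finitely many vanishings'' genuinely requires a conditional Borel--Cantelli argument, and the Borel--Cantelli estimates must be run globally in time, the split at the random (non-stopping) times $T,T'$ being performed only afterwards.
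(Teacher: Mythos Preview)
Your argument is correct and noticeably more rigorous than the paper's own proof, which proceeds by an informal ``induction on the number of alive connected candidate sets and their size'' to conclude that $\beta$ must grow within finite time. The paper essentially relies on the intuitions behind Lemmas~\ref{lem:xivanishes}--\ref{lem:maxalive} (a \textbf{Withdrawal?} transition that does not make its set vanish shrinks it with probability at least $\tfrac12$; a vanishing grows $\beta$ with probability at least $\tfrac23$; \textbf{Candidacy} either grows $\beta$ or creates a new alive set; the number of alive sets is bounded) and sketches a termination argument from there, without making the ``with probability~$1$'' part precise. Your route is genuinely different: you first use the conditional Borel--Cantelli lemma to turn the deterministic bound $\sum_k \mathds{1}_{G_k}\le n$ into ``a.s.\ finitely many vanishings'', then argue deterministically that past the last vanishing the number of alive connected candidate sets is non-decreasing and bounded, hence eventually constant, forbidding further \textbf{Candidacy} transitions; finally a second Borel--Cantelli application on the events $W_k$ forces $\tau$ to strictly decrease infinitely often, a contradiction. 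This buys you a clean separation between the probabilistic input (two Borel--Cantelli steps) and a purely combinatorial endgame, and it makes explicit the subtle point---glossed over in the paper---that ``$|\beta|$ eventually stabilizes'' does not by itself preclude infinitely many vanishings. One cosmetic remark: the case ``shrinking to one node'' that you list separately is in fact already an instance of vanishing (the surviving $\top$-node lands in $\beta$, so every node of $\cX$ is $\bot$ or in $\beta$); listing it separately is harmless but redundant.
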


\begin{proof}
Every transition is either a transition with only \textbf{Withdrawal} moves, or only \textbf{Candidacy} moves.

Note that from Lemma~\ref{lem:stayingalive} the number of alive connected component cannot shrink without one vanishing, and shrink by exactly one when it does happen.

Every transition with \textbf{Withdrawal} moves that do not make an alive connected candidate set vanish makes an alive connected candidate set shrink with probability at least $\frac{1}{2}$. Moreover when it makes one of those disappear is increase the size of $\beta$ with probability at least $\frac{2}{3}$.

Every transition with \textbf{Candidacy} moves either make a new alive candidate set appear, or increase the size of $\beta$. But as there can not be more than $\frac{n}{2} - |\beta|$ candidate set alive, there cannot be a transition that make a new alive candidate set appear while the number of alive candidate is maximal, and thus one must vanish before it becomes possible again (which will increase the size of $\beta$ with probability at least $\frac{2}{3}$).

By induction on the number of alive connected candidate set and their size, as long as \textbf{Candidacy} or \textbf{Withdrawal} are enabled, $\beta$ will grow within a finite time. As its size cannot be greater than the number of nodes, the execution must end.
\end{proof}

Now, we may suppose that every alive candidate set vanishes in finite time, and use our Lemma~\ref{lem:xivanishes}.

\begin{lemma} \label{lem:numberofvanish}
Let $1> p \geq 0$.  Any execution has at most  $\max\left( -\frac{9}{4}\ln p,\frac{\sqrt{2}}{\sqrt{2}-1}\frac{3n}{2} \right)$ transitions in which a connected candidate set vanishes with probability at least $1-p$.
\end{lemma}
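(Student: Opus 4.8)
The plan is to mirror the martingale argument already used in Lemma~\ref{bz-lem-speed}, but now tracking the number of transitions in which a connected candidate set vanishes rather than the number of rounds. First I would fix a worst-case strategy of the adversarial daemon and consider the induced probability space $\Omega$ of complete executions from an arbitrary starting configuration (we may restrict to executions satisfying the two normalisation properties, by Lemmas~\ref{lem:transformation1}~and~\ref{lem:transformation2}). Enumerate the transitions of the execution as $\gamma_0 \xrightarrow{t_1} \gamma_1 \xrightarrow{t_2} \cdots$, and let $(\mathcal{F}_i)$ be the natural filtration. Define $Z_i$ to be the indicator of the event ``the $i$-th transition is one in which an alive connected candidate set vanishes'', and let $Y_i$ be the indicator that $|\beta|$ strictly increased in transition $i$. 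By Lemma~\ref{lem:xivanishes}, on the event $\{Z_i=1\}$ we have $\mathbb{P}(Y_i=1\mid \mathcal{F}_{i-1}, Z_i=1)\geq \tfrac{2}{3}$, so $\mathbb{E}[Y_i\mid\mathcal{F}_{i-1}] \geq \tfrac{2}{3}\,\mathbb{E}[Z_i\mid\mathcal{F}_{i-1}]$ (being careful about measurability: $Z_i$ is $\mathcal{F}_{i}$-measurable, so one argues conditionally on the transition type and on which candidate set acts, all of which is $\mathcal{F}_{i-1}$-measurable given the daemon's strategy).

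Next I would set $\sigma$ to be the stopping time equal to the total number of vanishing transitions in the execution — finite almost surely by Lemma~\ref{lem:execends} — and let $N$ be the number of vanishing transitions among the first $i$. Since each strict increase of $|\beta|$ is permanent (Lemma~\ref{adv-lem-betagrows}) and $|\beta|$ is bounded by $n$, we get $\sum_{k\le i} Y_k \leq n$. Now form the martingale $M_i = \sum_{k=1}^{i} \left(Y_k - \mathbb{E}[Y_k\mid\mathcal{F}_{k-1}]\right)$; the increments are bounded by $1$, so Azuma applies: $\mathbb{P}(M_i \leq \beta) \leq e^{-2\beta^2/i}$ for $\beta\leq 0$. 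On the event $\{\sigma > i\}$ we have at least $i$ vanishing transitions among the first $i$ — wait, that is not quite the right indexing; rather, I would restrict attention to the subsequence of vanishing transitions directly. Concretely: let $\tau_1 < \tau_2 < \cdots$ be the indices of the vanishing transitions, define $\tilde Y_j = Y_{\tau_j}$ and run the martingale argument along this subsequence, so that $\sum_{j=1}^{i}\mathbb{E}[\tilde Y_j \mid \mathcal{F}_{\tau_j-1}] \geq \tfrac{2}{3} i$ as long as the $i$-th vanishing transition exists. Then $\{\sigma > i\}$ forces $\sum_{j\le i}\tilde Y_j - \sum_{j\le i}\mathbb{E}[\tilde Y_j\mid\cdots] \leq n - \tfrac{2}{3}i$, i.e.\ $\tilde M_i \leq n - \tfrac{2}{3}i$, and Azuma with $\alpha=\tfrac{2}{3}$ in place of $\alpha=\tfrac{1}{(\Delta+1)e}$ gives exactly the bound $\max\!\left(-\tfrac{9}{4}\ln p,\ \tfrac{\sqrt2}{\sqrt2-1}\tfrac{3n}{2}\right)$ by the same arithmetic as in Lemma~\ref{bz-lem-speed} (note $\alpha^{-2} = 9/4$ and $n/\alpha = 3n/2$).

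The main obstacle I anticipate is the measurability and stopping-time bookkeeping: one must verify that working along the random subsequence $(\tau_j)$ of vanishing transitions still yields a genuine martingale with respect to the stopped filtration $(\mathcal{F}_{\tau_j})$, and that the conditional bound $\mathbb{E}[\tilde Y_j \mid \mathcal{F}_{\tau_j - 1}] \geq \tfrac{2}{3}$ is legitimately applied — this requires that ``transition $\tau_j$ makes a given connected candidate set vanish'' be decomposed so that the randomness of the coin flips in that transition is independent of everything in $\mathcal{F}_{\tau_j-1}$, which is exactly what Lemma~\ref{lem:xivanishes} provides once we know $\tau_j$ is a transition acting on a single alive connected candidate set (guaranteed by the normalisation of executions). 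A cleaner alternative, which I would actually prefer to write, is to avoid the subsequence entirely: pad the count by defining $Y_i = 0$ whenever transition $i$ is not vanishing, keep $Z_i$ as the vanishing indicator, use $\mathbb{E}[Y_i\mid\mathcal F_{i-1}] \ge \tfrac23 \mathbb{E}[Z_i\mid\mathcal F_{i-1}]$, and track $A_i = \sum_{k\le i}\mathbb{E}[Y_k\mid\mathcal F_{k-1}]$ against $\tfrac23 \cdot(\text{number of vanishing transitions so far})$; then ``more than $i$ vanishing transitions'' forces $A_i \ge \tfrac23 i$ and the rest is identical to Lemma~\ref{bz-lem-speed}. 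Either way the probabilistic content is entirely a re-run of the earlier proof with new constants, so once the bookkeeping is set up correctly the calculation is routine.
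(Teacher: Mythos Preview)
Your proposal is correct and takes essentially the same approach as the paper: the paper's proof is a single sentence saying to re-run Lemma~\ref{bz-lem-speed} with $\alpha=\tfrac{2}{3}$ (from Lemma~\ref{lem:xivanishes}) while tracking vanishing transitions instead of rounds, which is exactly what you do. If anything you are more careful than the paper about the subsequence/measurability bookkeeping; the paper does not spell out the indexing issue you raise and fix.
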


\begin{proof}
It is the same as the proof of Lemma~\ref{bz-lem-speed}, but the event we track is the vanishing of connected candidate set (which happens in finite time with probability $1$ from Lemma~\ref{lem:execends}), the lower bound on the probability of the interesting event to happen when such a vanishing happens is $\frac{2}{3}$ from Lemma~\ref{lem:xivanishes}.
\end{proof}

\begin{theorem} 
Consider $p \in \left]0,1\right]$. From any configuration, the algorithm is self-stabilizing for a configuration $\gamma$ in which 
$\beta(\gamma)$ is a maximal independent set,  
with at most $\mathcal O(n^2) $ moves
with probability at least $1-p$.
\end{theorem}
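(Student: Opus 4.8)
The plan is to combine the correctness statement of Lemma~\ref{lem:adv-stab} with the round/move-counting machinery already developed. First, by Lemma~\ref{lem:adv-stab}, a configuration $\gamma$ is stable if and only if $\beta(\gamma)$ is a maximal independent set of $G$, so it suffices to show that from any configuration the execution reaches a stable configuration within $\mathcal{O}(n^2)$ moves with probability at least $1-p$. By Lemma~\ref{lem:execends} the execution terminates with probability $1$, and when it terminates it is in a stable configuration by definition of a maximal execution; so the only thing to control is the number of moves. Since (after the transformations of Lemmas~\ref{lem:transformation1} and~\ref{lem:transformation2}) we may restrict attention to executions satisfying the two structural properties, and these transformations do not increase the move count, it is enough to bound the number of moves in such a transformed execution.

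The key accounting step is to charge every move to a transition in which a connected candidate set vanishes. First I would argue that there are at most $\mathcal{O}(n)$ transitions in the whole execution, by the following reasoning: each transition is either a pure \textbf{Candidacy} transition or a pure \textbf{Withdrawal?} transition acting on a single connected candidate set. Using Lemmas~\ref{lem:newalive}, \ref{lem:stayingalive}, and~\ref{lem:maxalive}, each \textbf{Candidacy} transition either strictly enlarges $\beta$ or creates a new alive connected candidate set (and the number of alive sets is capped by $\frac n2 - |\beta(\gamma)|$, hence at most $\frac n2$ in total); each \textbf{Withdrawal?} transition either makes an alive connected candidate set vanish (which, by Lemma~\ref{adv-lem-betagrows}, cannot decrease $\beta$, and by Lemma~\ref{lem:xivanishes} enlarges it with probability $\geq \frac23$), or shrinks an alive candidate set by at least one node. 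Since $|\beta|$ is nondecreasing and bounded by $n$, and since the total size of alive candidate sets is bounded by $n$ at any time and each ``shrink'' event removes at least one node from such a set while only a vanish or a \textbf{Candidacy} transition can add nodes back, a counting argument (amortized over the $\mathcal{O}(n)$ growth steps of $\beta$ and $\mathcal{O}(n)$ vanish events, each of which frees a bounded budget of shrink steps) gives $\mathcal{O}(n)$ transitions overall. Finally, each transition involves at most $n$ moves, so the total number of moves is $\mathcal{O}(n^2)$. The probabilistic refinement (to get the high-probability statement with explicit dependence on $p$) comes from Lemma~\ref{lem:numberofvanish}, which bounds the number of vanishing transitions by $\max\!\left(-\tfrac94\ln p,\ \tfrac{\sqrt2}{\sqrt2-1}\cdot\tfrac{3n}{2}\right)$ with probability $\geq 1-p$; plugging this into the amortized bound keeps the move count at $\mathcal{O}(n^2)$ with probability $\geq 1-p$.

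Concretely, the steps in order are: (1) invoke Lemma~\ref{lem:adv-stab} to reduce the goal to reaching a stable configuration; (2) invoke Lemma~\ref{lem:execends} so that with probability $1$ the execution ends in such a configuration; (3) reduce to a transformed execution via Lemmas~\ref{lem:transformation1}--\ref{lem:transformation2}; (4) set up the potential $\Phi(\gamma) = c_1|\beta(\gamma)| + c_2\cdot(\text{total number of $\top$-valued nodes outside }\beta(\gamma))$ (or a similar two-term potential counting alive candidate sets and their sizes) and check, case by case on the transition type, that every transition either increases $|\beta|$, or decreases the second term, using Lemmas~\ref{adv-lem-betagrows}, \ref{lem:newalive}, \ref{lem:stayingalive}, \ref{lem:maxalive}; (5) conclude $\mathcal{O}(n)$ transitions deterministically, hence $\mathcal{O}(n^2)$ moves; (6) upgrade to the high-probability statement using Lemma~\ref{lem:numberofvanish} to bound the randomized part of the count.

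The main obstacle I expect is step (4): pinning down a potential function that is genuinely monotone along \emph{every} transition, including the subtle \textbf{Withdrawal?} transitions where a candidate set shrinks without vanishing and where the coin flips could, in principle, recreate structure. One has to be careful that, after the transformations, a \textbf{Withdrawal?} transition acts only on one connected candidate set, so it cannot spawn a \emph{new} alive set and can only shrink or vanish the one it touches — this is exactly what Lemma~\ref{lem:stayingalive} buys us, and it is the linchpin that makes the second potential term monotone. Matching the constants so that the \textbf{Candidacy} transitions (which may grow the second term by creating a new alive set) are still globally dominated — using the cap $\frac n2 - |\beta|$ from Lemma~\ref{lem:maxalive} so that new-set creations are paid for by future vanishings, each of which is in turn paid for by growth of $\beta$ — is the delicate bookkeeping; once it is set up correctly, the $\mathcal{O}(n^2)$ move bound and its high-probability version follow routinely.
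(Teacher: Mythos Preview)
Your plan has a genuine gap in step~(4)--(5). You assert that ``each \textbf{Withdrawal?} transition either makes an alive connected candidate set vanish \dots\ or shrinks an alive candidate set by at least one node,'' and then build a potential-function argument on that dichotomy to get $\mathcal{O}(n)$ transitions \emph{deterministically}. But the dichotomy is false: in a \textbf{Withdrawal?} transition every activated node flips an independent fair coin, and with positive probability \emph{all} coins come up ``keep $s_u=\top$,'' in which case the configuration is unchanged. Lemma~\ref{lem:stayingalive} only gives $\cX' \subseteq \cX$, not $\cX' \subsetneq \cX$. Hence no potential of the kind you propose is monotone along every transition, and there is no deterministic bound on the number of transitions at all (the adversary could, in principle, be lucky forever). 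Even if the dichotomy did hold, your amortization would yield $\mathcal{O}(n^2)$ shrink transitions (each of the $\mathcal{O}(n)$ vanishings frees a budget of up to $n$ shrink steps, not a constant), and then $\mathcal{O}(n^3)$ moves, not $\mathcal{O}(n^2)$.

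The paper's proof avoids counting transitions altogether and instead counts \emph{moves} directly, separating \textbf{Withdrawal?} moves into \emph{successful} ones (those that actually flip $s_u$ to $\bot$) and unsuccessful ones. It charges every \textbf{Candidacy} move and every \emph{successful} \textbf{Withdrawal?} move to the eventual vanishing of the connected candidate set containing that node; since a set has at most $n$ nodes and can only shrink (Lemma~\ref{lem:stayingalive}), each of the $x$ vanishings absorbs at most $n$ moves of each kind, giving at most $xn$ of each. The crucial extra step you are missing is a concentration bound (Hoeffding for a binomial with parameter $\tfrac12$) to control the \emph{total} number of \textbf{Withdrawal?} moves from the number of successful ones; this costs a failure probability $p'$, which is why the paper splits $p$ as $p'=p/2$ and union-bounds with the $p'$ from Lemma~\ref{lem:numberofvanish}. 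Without this probabilistic step, the unsuccessful \textbf{Withdrawal?} moves are simply unbounded in your argument.
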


\begin{proof}
Consider $p' \in \left]0,1\right]$, $p' = \frac{p}{2}$, and $\lambda = \max\left( -\frac{9}{4}\ln p',\frac{\sqrt{2}}{\sqrt{2}-1}\frac{3n}{2} \right)$.
We denote by $x$ the number of vanishing of connected candidate sets in the execution. Using Lemma~\ref{lem:numberofvanish}, we know that $x> \lambda$ with probability at most $p'$.

First, we count on the number of moves of type \textbf{(u,Candidacy)}. For each transition in which a connected candidate set vanish, those nodes have had their $s$-value set to $\top$ by the last transition containing \textbf{Candidacy} moves on the said candidate set (or it does not exists they had $s$-value $\top$ in the initial state). This makes for at most $n$ \textbf{Candidacy} move for every connected candidate set that vanish, thus $xn$ in total.

To count for \textbf{Withdrawal?} moves, we will first count the number of successful (meaning the $s$-value of the node it is applied to change in the transition). For every connected candidate set that vanish, as they may only shrink from the time they appear due to Lemma~\ref{lem:stayingalive}, it takes at most $n$ successful \textbf{Withdrawal?} moves to make it disappear. Note that any \textbf{Withdrawal?} move acts on an alive connected candidate set, thus every \textbf{Withdrawal?} participates in the shrinking of an alive connected candidate set. Then, in total, we need at most $xn$ successful successful \textbf{Withdrawal?} moves to have $x$ vanishings of connected candidate sets.

Each \textbf{Withdrawal?} move has probability $\frac{1}{2}$ to be successful, a probability which does not depend on anything but the fact that rule \textbf{Withdrawal?} is applied, and are thus the outcome of different \textbf{Withdrawal?} moves are independant. The number of successful \textbf{Withdrawal?} is then connected to the total number of \textbf{Withdrawal?} by a binomial law of parameter $\lambda$ and $\frac{1}{2}$, where $\lambda$ is the number of \textbf{Withdrawal?} moves. Using the standard bound obtained by Hoeffding inequality for the binomial law, the probability to have $\lambda n-1$ or less successful \textbf{Withdrawal?} after $2(\lambda n + \sqrt{- \lambda n\log p' } - 1)$ \textbf{Withdrawal?} moves is at most $p$. This means that the probability to not have finished the execution after $2(xn + \sqrt{-xn\log p' } - 1)$ \textbf{Withdrawal?} moves is at most $p'$.

Then, using the union bound, the probability for $x$ to be such that $x > \lambda$ or the number of \textbf{Withdrawal?} moves in the execution to be greater than $2(\lambda n + \sqrt{- \lambda n\log p' } - 1)$ is at most $2p'=p$.

Thus, the probability for to converge in less than $2(\lambda n + \sqrt{- \lambda n\log p' } - 1) + \lambda n = O(n^2)$ move in total is at least $1-p$, which concludes the proof.
\end{proof}

\bibliography{MIS}{}
\bibliographystyle{plain}

 \end{document}